\documentclass[%
reprint,
superscriptaddress,
amsmath,
amssymb,
aps,
pra,
]{revtex4-2}

\usepackage[english]{babel}
\usepackage[usenames]{color}
\usepackage[document]{ragged2e}

\usepackage{amsfonts}
\usepackage{amsmath}
\usepackage{amssymb}
\usepackage{bm}
\usepackage{braket}
\usepackage{caption}
\usepackage{dcolumn}
\usepackage{enumitem}
\usepackage[final]{graphicx}
\usepackage{lineno}
\usepackage{mathtools}
\usepackage{orcidlink}
\usepackage{soul}
\usepackage{subcaption}
\usepackage{url}
\usepackage{verbatim}
\usepackage{xfrac}
\usepackage{xspace}
\usepackage{amsthm}
\usepackage{hhline}
\renewcommand{\arraystretch}{1.2}
\usepackage{algorithm}
\usepackage{algpseudocode}
\usepackage{hyperref}

\newtheorem{theorem}{Theorem}
\newtheorem{lemma}{Lemma}

\begin{document}

\title{From Hilbert's Tenth Problem to Quantum Speedup: \\
Explicit Oracles for Bounded Diophantine Systems}

\author{Gabriel Escrig}
\email{gescrig@ucm.es}
\affiliation{Departamento de Física Teórica, Universidad Complutense de Madrid.}

\author{M. A. Martín-Delgado}
\email{mardel@ucm.es}
\affiliation{Departamento de Física Teórica, Universidad Complutense de Madrid.}
\affiliation{CCS-Center for Computational Simulation, Universidad Politécnica de Madrid.}

\justifying

\begin{abstract}
Solving non-linear Diophantine systems lies at the mathematical core of integer optimization and cryptography. While the general unbounded problem is undecidable, even over bounded integer domains it remains classically intractable in the worst case. In this work, we introduce a fully reversible quantum algorithmic framework tailored to solve arbitrary polynomial Diophantine equations over bounded integer domains. The core of our approach is the explicit, gate-level synthesis of an evaluation oracle for amplitude amplification. By coherently evaluating polynomial constraints via in-place two's complement arithmetic and routing operations into a single recycled accumulator, this garbage-free strategy achieves a compact and scalable synthesis of the underlying non-linear arithmetic. Through analytical derivations and empirical circuit simulations, we prove that the overall spatial complexity is bounded by $q = \mathcal{O}((n + d^2)\log_2 N)$ logical qubits for $n$ variables, maximum degree $d$, and interval length $N$. The non-Clifford Toffoli depth is upper-bounded by $\mathcal{O}(q^2)$. This structural scaling exponent remains invariant to the variable count, modulated linearly only by the coefficients' Hamming weights. By moving beyond abstract black-box assumptions, this explicit architectural synthesis guarantees that the necessary quantum arithmetic acts as a bounded polynomial overhead. This ensures a quadratic speedup over classical exhaustive search, whether retrieving a unique assignment or dynamically enumerating an unknown number of solutions.
\end{abstract}

\maketitle
\justifying

\section{Introduction}\label{sec:intro}

\begin{figure*}[t]
  \centering
  \begin{subfigure}{0.49\textwidth}
    \centering
    \includegraphics[width=\linewidth]{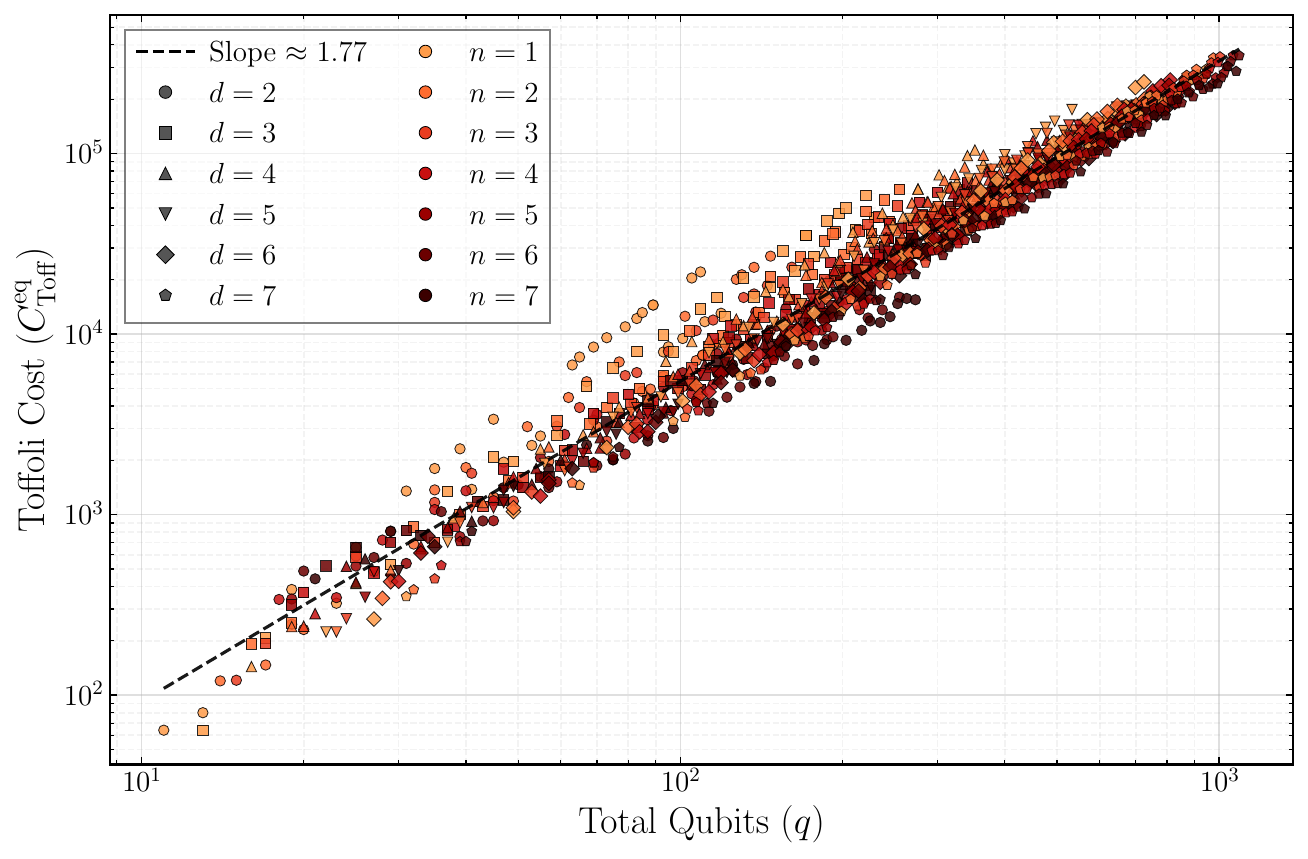}
    \label{fig:toffoli_scaling}
  \end{subfigure}\hfill
  \begin{subfigure}{0.49\textwidth}
    \centering
    \includegraphics[width=\linewidth]{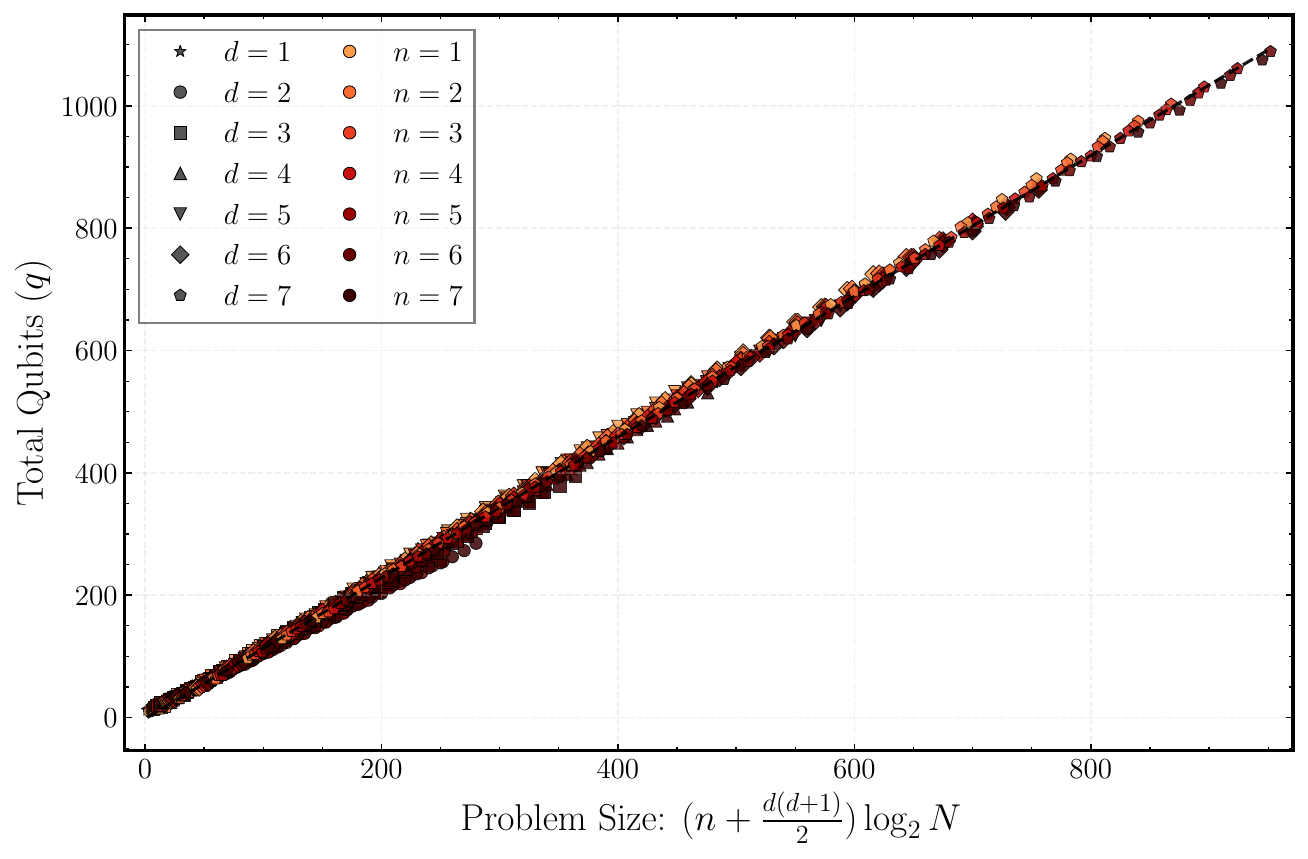}
    \label{fig:qubit_overhead}
  \end{subfigure}
  \captionsetup{font=small}
    \caption{\justifying 
    \textbf{Complexity analysis of the proposed quantum architecture.} Resource scaling is evaluated across 1500 Diophantine problem instances to establish empirical bounds. 
    \textbf{(Left)} Log-log scaling of the Toffoli gate count as a function of the total number of logical qubits $q$. Data points are generated by varying the number of variables $n \in [1, 7]$, the maximum polynomial degree $d \in [2, 7]$, and the search interval length $N$. The empirical scaling exponent of $1.77$ validates the sub-quadratic growth of the gate complexity in the pre-asymptotic regime.
    \textbf{(Right)} Total qubit requirement $q$ versus the theoretical problem size, $(n + \frac{1}{2}d(d+1)) \log_2 N$, for $n, d \in [1, 7]$. The results demonstrate a strictly linear spatial overhead, confirming the efficiency of the dynamic in-place arithmetic implementation.
  }
  \label{fig:resource_estimation_combined}
\end{figure*}

Diophantine systems of equations, in which all unknowns are required to take integer values, occupy a central role in number theory, discrete mathematics, and theoretical computer science. Their study dates back to classical problems posed by Diophantus and underpins modern developments in algebraic number theory and integer optimization~\cite{HardyWright, Cohen1993}. Despite their seemingly simple algebraic structure, these systems exhibit a level of computational complexity that is substantially higher than that of their continuous counterparts. While the solvability of a real linear system can be decided efficiently through rank conditions, the integer-constrained setting involves number-theoretic compatibility requirements and quickly becomes computationally intractable. Even in the linear case, determining feasibility is closely related to integer programming and is known to be NP-complete in bounded formulations~\cite{Karp1972, GareyJohnson1979}, while the general unbounded non-linear problem is formally undecidable, as demonstrated by Matiyasevich's resolution of Hilbert's tenth problem~\cite{MATIJASEVI2003}. Consequently, practical applications almost exclusively consider bounded domains, where the fundamental undecidability collapses into extreme classical intractability. It is precisely this bounded, decidable-but-NP-complete regime that defines the operational scope of the present work, where the quantum speedup we develop becomes meaningful.

This inherent computational hardness is not merely a theoretical curiosity; it constitutes a formidable bottleneck across diverse domains of applied mathematics, operations research, and information security. In the realm of operations research, many complex scheduling, resource allocation, and combinatorial optimization tasks are naturally formulated as Non-Linear Integer Programming problems, which fundamentally reduce to solving bounded Diophantine systems~\cite{Hemmecke2010}. Similarly, in computer science, automated theorem proving and formal software verification often rely on Satisfiability Modulo Theories solvers that must navigate non-linear arithmetic constraints over discrete integers~\cite{Barrett2018_SMT}. Furthermore, the hardness of discrete algebraic optimization forms the mathematical bedrock of cryptography. Beyond classical integer factorization, emerging Post-Quantum Cryptography paradigms---such as Lattice-based assumptions~\cite{10.1561/0400000074} and Multivariate Quadratic signature schemes~\cite{Ding}---rely fundamentally on the intractability of finding integer vectors that satisfy highly coupled polynomial equations. Advancing the algorithmic resolution of non-linear integer equations therefore has profound, cross-disciplinary implications.

In domains where such problems lack exploitable algebraic structure, classical analytical methods falter. The discrete structure of integer domains strips away the continuous symmetries that allow for efficient resolution via gradient methods or standard continuous relaxations. While specific linear subclasses can be approached via lattice reduction, determining the solvability of even simple bounded binary quadratic equations remains NP-complete~\cite{Manders1978}. In these unstructured regimes, heuristic reductions fail, rendering exhaustive enumeration over a defined, bounded search space $\mathcal{D}$ the only strategy guaranteed to find a solution or definitively prove its non-existence. Letting $|\mathcal{D}|$ be the size of this finite domain, classical deterministic algorithms are forced into an $\mathcal{O}(|\mathcal{D}|)$ brute-force verification. It is precisely in this context that quantum computation offers a definitive advantage. By mapping the discrete search space to a quantum superposition, Grover's algorithm~\cite{10.1145/237814.237866} allows for solution recovery with $\mathcal{O}(\sqrt{|\mathcal{D}|})$ oracle queries, providing a provable quadratic speedup over the classical baseline~\cite{PhysRevA.60.2746}. However, realizing this theoretical advantage in practice requires moving beyond abstract black-box metrics to develop concrete, resource-efficient quantum oracles.

Throughout this work, we restrict our attention to bounded Diophantine feasibility over finite integer intervals. The fundamental undecidability of the general problem remains; our contribution explicitly addresses the quantum-enhanced exhaustive search regime, where classical enumeration is the only viable strategy. Moving beyond abstract query complexity, we present an explicit, end-to-end construction of the Grover oracle tailored to these bounded polynomial systems. We detail the reversible arithmetic circuits required to evaluate constraints coherently and dynamically, without any reliance on technically prohibitive quantum-RAM assumptions. To overcome the massive spatial overhead of generic polynomial multipliers, we introduce a highly optimized \textit{shift-and-add} arithmetic engine that exploits zero-cost virtual rewiring for scalar multiplication. By nesting this fundamental engine within a novel reversible \textit{compute-utilize-uncompute} architecture, we successfully mitigate the severe circuit depth overhead typically associated with nested quantum arithmetic. We analytically demonstrate that the non-Clifford Toffoli gate complexity of the Grover operator is strictly bounded above by a worst-case theoretical limit of $\mathcal{O}(q^2)$, where $q$ is the total number of logical qubits. More importantly, our empirical evaluation reveals that the practical circuit depth exhibits a highly efficient sub-quadratic scaling well within this theoretical bound, as highlighted in Figure~\ref{fig:resource_estimation_combined}. Furthermore, the framework achieves a highly compact circuit width, bounding the total space complexity to $q = \mathcal{O}((n + d^2)\log_2 N)$ for a system of $n$ variables, maximum degree $d$, and search interval of width $N$.

The remainder of this article is organized as follows. Section \ref{sec:diophantine_systems} establishes the formal mathematical framework and complexity landscape of Diophantine systems. Section \ref{sec:overview} provides a high-level overview of the proposed quantum algorithmic approach, focusing on the mechanics of Grover's search applied to discrete integer spaces. Section \ref{sec:quantum_representation} introduces the core two's complement encoding scheme and develops the generalized compute-utilize-uncompute methodology and shift-and-add arithmetic for coherently constructing polynomials of arbitrary degree $d$. Building on this foundation, Section \ref{oracle_implementation} details the explicit circuit implementation of the evaluation oracle, establishing the formal proofs for our claimed complexity bounds. Section \ref{sec:results} presents comprehensive numerical simulations, featuring resource scaling analysis and exact precision plots that empirically validate the theoretical efficiency and correctness of the Grover operator. Finally, Section \ref{sec:conclusions} summarizes our findings and outlines future research directions.

\section{Diophantine Systems of Equations} \label{sec:diophantine_systems}
\begin{figure*}[t]
    \centering 
    \includegraphics[width=0.6\textwidth]{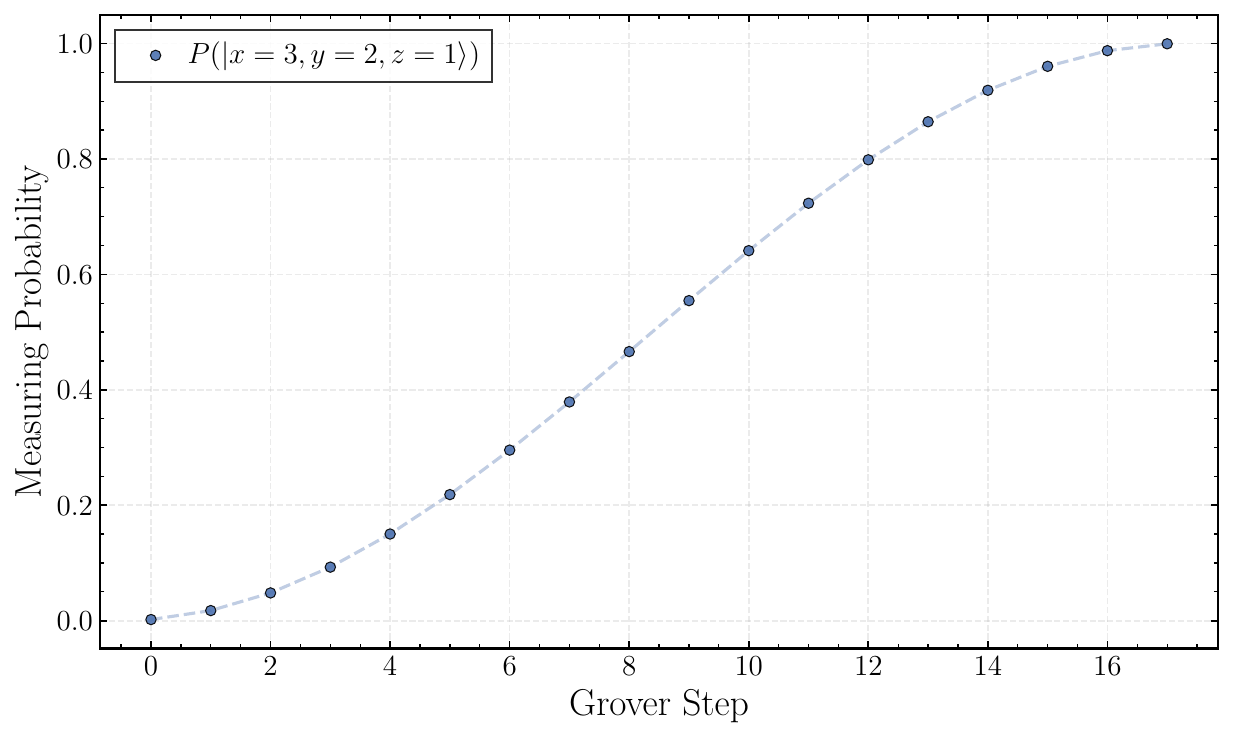}
    \captionsetup{justification=justified, font=small}
    \caption{\justifying \textbf{Amplitude amplification dynamics for a strongly coupled multivariate quadratic Diophantine system.} The plot displays the probability of measuring the unique target state $P(\ket{x=3, y=2, z=1})$ as a function of the number of Grover iterations. The underlying quantum oracle evaluates the explicit non-linear system: $3x^2+2y^2+5z^2 = 40$, $2xy-4yz+3xz = 13$, and $-x^2+5y-7z = -6$. Each of the $n=3$ variables is discretized using a 3-qubit register, yielding a global search space of size $|\mathcal{D}|=2^9=512$. The coherent search successfully navigates the complex arithmetic landscape, with the success probability peaking at $99.9\%$ exactly at the theoretically predicted optimal step $t_{\mathrm{opt}} \approx \lfloor \frac{\pi}{4}\sqrt{|\mathcal{D}|} \rfloor = 17$, thereby validating the phase accuracy of the fully reversible integer arithmetic implementation.
    }
    \label{fig:prob_vs_step}
\end{figure*}

A Diophantine system is a collection of polynomial equations with integer coefficients in which all unknowns are required to take integer values. Formally, for integers $m, n \geq 1$, one considers polynomials
\begin{equation}\label{eq:diophantine_general}
    f_{j} \in \mathbb{Z}[x_{1}, \dots, x_{n}], \qquad j = 1,\dots,m,
\end{equation}
and seeks an integer vector
\begin{equation}
    \mathbf{x} = (x_{1},\dots,x_{n}) \in \mathbb{Z}^{n},
\end{equation}
satisfying the system
\begin{equation}\label{eq:syst_dioph}
    f_{1}(\mathbf{x}) = f_{2}(\mathbf{x}) = \cdots = f_{m}(\mathbf{x}) = 0.
\end{equation}

Eq.~\eqref{eq:syst_dioph} defines the general Diophantine feasibility problem. Its complexity landscape is structured by the choice of solution domain. Hilbert's tenth problem in its general unbounded form---deciding whether Eq.~\eqref{eq:syst_dioph} admits a solution in $\mathbb{Z}^n$---is undecidable~\cite{MATIJASEVI2003}. When the search is restricted to a bounded integer domain, the problem becomes decidable but remains NP-complete in several natural formulations: 0-1 integer programming~\cite{Karp1972}, bounded quadratic Diophantine equations~\cite{Manders1978}, and bounded integer programming feasibility~\cite{BoroshTreybig}.

In this work, we consider the explicit bounded search space $\mathcal{D} = [-N/2, N/2-1]^n$, with $N$ given as part of the input. In this regime, the problem is decidable by exhaustive enumeration over $\mathcal{D}$, and the classical worst-case cost is $\Theta(N^n \cdot C_{\text{eval}})$, where $C_{\text{eval}}$ is the bit-complexity of evaluating the polynomial system on a single candidate. This deterministic baseline is precisely the target against which our quantum framework, built on coherent amplitude amplification, provides a quadratic reduction to $\mathcal{O}(\sqrt{N^n/M})$ oracle queries for $M \ge 1$ valid assignments.

This sharp transition in complexity fundamentally limits the applicability of standard continuous quantum algorithms. In the continuous domain ($\mathbb{R}^{n}$), algorithms such as the HHL protocol~\cite{PhysRevLett.103.150502} offer potential exponential quantum speedups for solving linear systems, provided highly favorable conditions are met---specifically, for sparse, well-conditioned matrices alongside efficient state preparation. However, the very nature of such protocols constitutes a structural bottleneck in the Diophantine setting. Amplitude-encoded algorithms prepare a quantum state where the probability amplitudes themselves represent the continuous floating-point solution. They provide no algebraic mechanism to enforce strict integrality constraints, nor can they distinguish between valid integer roots and non-integer approximations.

Consequently, the discrete arithmetic hardness persists. The inability of standard quantum linear algebra to accommodate the discrete nature of $\mathbb{Z}^n$ dictates a mandatory paradigm shift. To coherently resolve bounded Diophantine systems---whether linear or highly non-linear---one must abandon continuous amplitude encoding in favor of exact digital quantum arithmetic. This structural necessity motivates the approach explored in this work: explicitly embedding the bounded integer search space into a quantum superposition and utilizing a fully reversible, exact algebraic oracle to systematically isolate valid integer assignments.

\begin{figure*}[t] 
    \centering \includegraphics[width=1\textwidth]{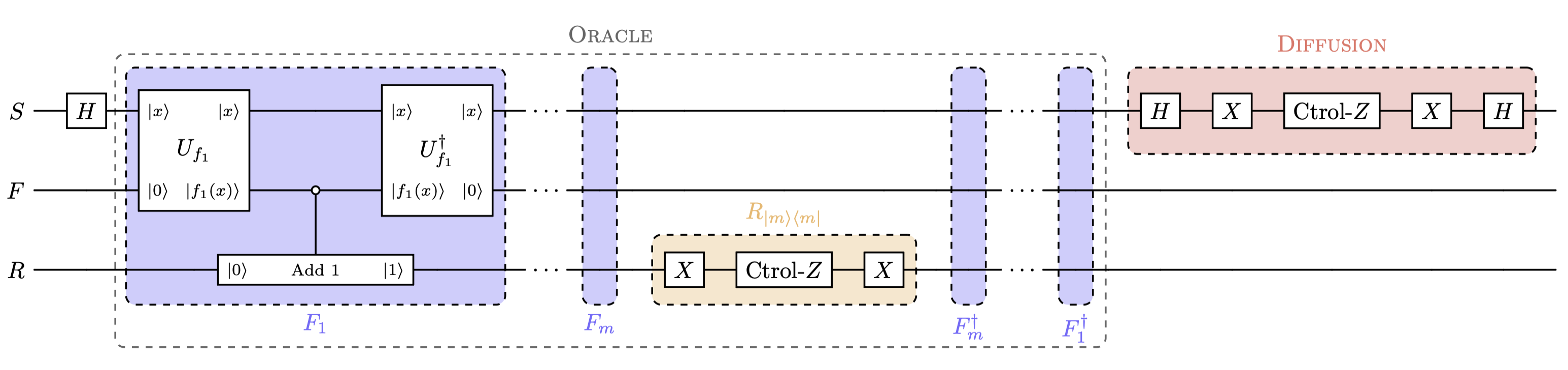}
    \captionsetup{font=small}
    \caption{\justifying \textbf{Schematic representation of a single Grover iteration.} The circuit explicitly constructs the composite operator $G=DO_f$, highlighting the sequential application of the Diophantine oracle (left block), which coherently evaluates the system of $m$ polynomial equations to mark valid solution states, followed by the diffusion operator (right block), which amplifies their amplitudes.}
    \label{fig:Grover_iteration}
\end{figure*}

\section{Overview of the Quantum Search Strategy} \label{sec:overview}

Our objective is to determine the integer solutions of a Diophantine system within a bounded domain. Since no general deterministic method exists for solving arbitrary Diophantine equations, exhaustive search over the discrete space remains the most direct---and in unstructured cases, the only viable---strategy. Grover's algorithm~\cite{10.1145/237814.237866, RevModPhys.74.347} provides a provable quadratic speedup over classical enumeration, rendering it the natural quantum analogue for solving bounded Diophantine systems.

Grover's search operates by iteratively amplifying the probability amplitudes of ``marked'' basis states, which in this context correspond to integer assignments $\mathbf{x}$ that satisfy the Diophantine system. This amplification is driven by two alternating unitaries: the oracle, which applies a selective phase flip to feasible solutions, and the diffusion operator, which performs an inversion about the mean of the probability amplitudes. After a specific number of iterations proportional to $\sqrt{|\mathcal{D}|/M}$ (where $|\mathcal{D}|$ is the total size of the discrete search space and $M$ is the number of valid solutions), the probability of measuring a satisfying assignment approaches unity, as explicitly demonstrated for a highly coupled non-linear system in Figure~\ref{fig:prob_vs_step}. It is worth noting that the search operator used in this work corresponds to the canonical Grover choice, which may be regarded as a distinguished member of a broader family of Grover-type quantum search algorithms. Generalized kernels of this kind retain the characteristic quadratic speedup, provided the underlying phase relations are chosen consistently~\cite{PhysRevA.62.062303}.

To embed this mathematical problem into quantum hardware without collapsing the superposition, our architecture dynamically allocates three primary quantum registers. First, the System register ($S$) stores the integer assignments $\mathbf{x}$ in superposition. Second, the Function register ($F$) acts as a reversible arithmetic accumulator to evaluate the polynomials $f_j(\mathbf{x})$ on-the-fly. Finally, the Result counter ($R$) tracks the exact number of equations satisfied by each quantum state. The precise orchestration of these registers ensures that the oracle operates strictly unitarily.

The complete algorithmic pipeline for solving Diophantine systems is depicted schematically in Figure~\ref{fig:Grover_iteration}, and proceeds as follows:

\begin{enumerate}[label=\arabic*), leftmargin=2.2em]

  \item[\bf (1)] \emph{Initialization.}  
        Construct a uniform superposition over the total bounded integer domain $\mathcal{D}$. Assuming each of the $n$ variables is encoded as a $w$-bit two's complement integer, the search space is $\mathcal{D} = [-2^{w-1}, 2^{w-1}-1]^n$, with total cardinality $|\mathcal{D}| = 2^{nw}$:
        \begin{equation}
            \ket{\psi^{(0)}} 
            = \frac{1}{\sqrt{|\mathcal{D}|}} \sum_{\mathbf{x}\in\mathcal{D}} \ket{\mathbf{x}}_{S}.
        \end{equation}
        Here, $S$ denotes the system register storing the encoded integer vector $\mathbf{x}$.

  \item[\bf (2)] \emph{Oracle application.}  
        Apply the Diophantine oracle $O_f$, which applies a conditional negative phase exclusively to the solution states:
        \begin{equation}
            O_f \ket{\mathbf{x}}_S 
            = (-1)^{\mathrm{Sol}(\mathbf{x})}\ket{\mathbf{x}}_S,
        \end{equation}
        where $\mathrm{Sol}(\mathbf{x}) = 1$ if $\mathbf{x}$ satisfies all polynomial equations (\ref{eq:syst_dioph}), and $0$ otherwise. The physical synthesis of $O_f$ constitutes the primary technical challenge of this work and is detailed in Section~\ref{sec:quantum_representation}.

  \item[\bf (3)] \emph{Diffusion step.}  
        Apply the Grover diffusion operator
        \begin{equation}
            D = 2\ket{\psi^{(0)}}\!\bra{\psi^{(0)}} - I,
        \end{equation}
        which performs an inversion about the mean of the probability amplitudes.

  \item[\bf (4)] \emph{Grover iteration.}  
        Perform $t$ iterations of the composite search operator $G = D\,O_f$, producing the state:
        \begin{equation}
            \ket{\psi^{(t)}} = G^{t}\ket{\psi^{(0)}}.
        \end{equation}
        If the number of feasible solutions $M$ is known \textit{a priori}, the optimal iteration count is:
        \begin{equation}
            t_{\mathrm{opt}} \approx \left\lfloor\frac{\pi}{4}\sqrt{\frac{|\mathcal{D}|}{M}}\right\rfloor.
        \end{equation}
        Remarkably, even when $M$ is strictly unknown---as is typical in most practical scenarios---the algorithmic framework remains fully functional and preserves the quantum speedup. The specific strategies employed to iteratively retrieve, enumerate, or count these solutions without prior calibration are detailed in Section~\ref{sec:quantum_counting}.

  \item[\bf (5)] \emph{Measurement.}  
        Measure the system register $S$ in the computational basis. With high probability, the observed integer vector $\mathbf{x}$ constitutes a valid solution.
\end{enumerate}

This framework provides a fully quantum methodology for Diophantine resolution. However, its practical viability hinges entirely on the internal architecture of the oracle $O_f$. Unlike approaches that rely on quantum-RAM to query pre-computed look-up tables or extensive classical pre-processing, we propose a fully quantum, on-the-fly arithmetic construction. Evaluating arbitrary polynomial equations coherently requires meticulous reversible logic. The explicit synthesis of these arithmetic circuits---including the shift-and-add arithmetic and the sequential compute-utilize-uncompute methodology---is developed in detail in Section~\ref{sec:quantum_representation}.

\begin{figure*}[t] 
    \centering 
    \includegraphics[width=1\textwidth]{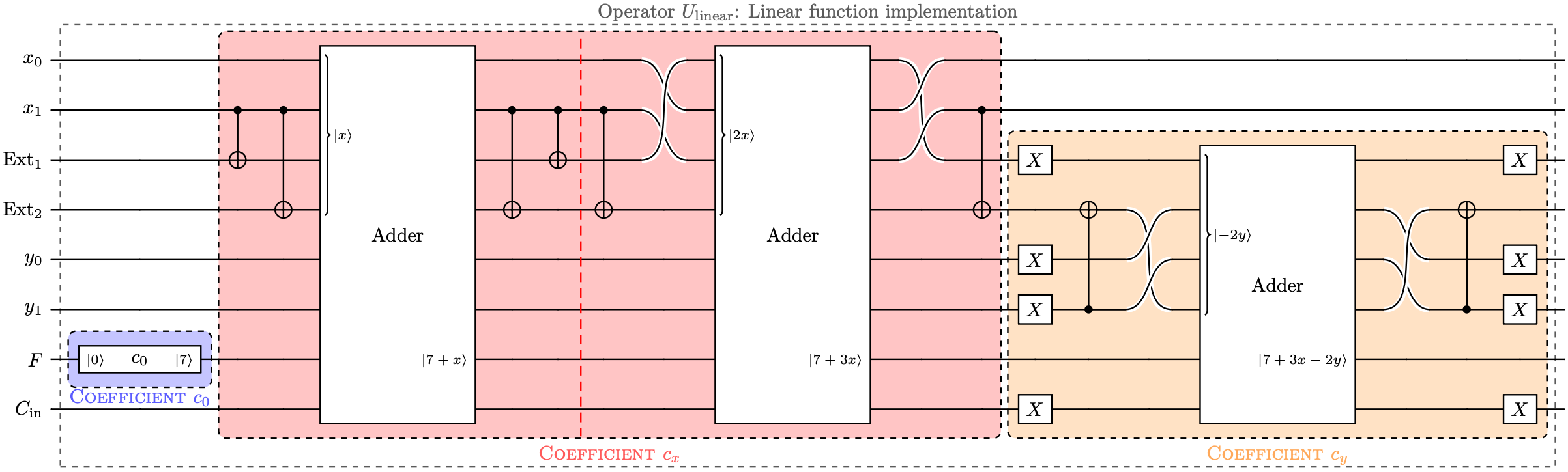}
    \captionsetup{font=small}
    \caption{\justifying \textbf{Circuit schematic of the logical rewiring technique.} The implementation is demonstrated for the evaluation of the linear function $f(x,y) = 3x - 2y+7$. By exploiting cost-free logical bit-shifts (rewiring), the need for full quantum multipliers is eliminated. Since the coefficient of $x$ ($c_x = 3 \equiv 11_2$) has a Hamming weight of $w_H=2$, its scalar multiplication is strictly decomposed into two sequential quantum additions. Conversely, to implement the negative coefficient $c_y = -2$, the circuit utilizes two's complement arithmetic: the input qubits of the $y$ register are inverted via $X$ gates and the carry-in bit ($C_{\mathrm{in}}$) is set to $|1\rangle$. The scalar factor of 2 is concurrently resolved through a 1-bit logical shift prior to a single quantum addition.}
    \label{fig:re-wiring}
\end{figure*}

\section{Quantum Representation of Diophantine Systems} 
\label{sec:quantum_representation}

To explicitly construct the evaluation oracle $O_f$, we must map the discrete domain of integer variables to the state space of a quantum register. By definition, Diophantine systems restrict every equation to polynomial forms; evaluating them coherently therefore reduces to computing weighted sums and multiplications of integer variables. Since bounded multiplication can be synthesized from nested additions, the quantum adder emerges as the fundamental arithmetic primitive~\cite{PhysRevA.54.147, Wang2025}. Building on these reversible logic blocks, this section details the fully quantum arithmetic architecture required to evaluate arbitrary polynomials coherently in superposition.

\subsection{Two's Complement Encoding}
\label{subsec:twos_complement}

Since our construction operates over integer-valued variables, it is necessary to encode both positive and negative numbers within a reversible quantum representation. To this end, we adopt the standard two's complement binary encoding, which provides a uniform and arithmetic-friendly treatment of signed integers in quantum circuits. Let $x \in \mathcal{D} \subset \mathbb{Z}$ be an integer variable restricted to a bounded domain. We encode $x$ into a quantum register of $w$ qubits, where a computational basis state $\ket{x} = \ket{b_{w-1}b_{w-2}\dots b_0}$, with $b_i \in \{0, 1\}$, represents the integer value:
\begin{equation}
    x = -b_{w-1}2^{w-1} + \sum_{i=0}^{w-2} b_i 2^i.
\end{equation}
Here, the most significant bit (MSB), $b_{w-1}$, serves as the sign bit. The state space of the $w$-qubit register spans the symmetric interval $\mathcal{D}_k = [-2^{w-1}, 2^{w-1}-1]$. For a Diophantine system comprising $n$ independent variables $\mathbf{x} = (x_1, \dots, x_n)$, the joint state is prepared in a composite Hilbert space $\mathcal{H}_{S} = \bigotimes_{j=1}^n \mathcal{H}_{x_j}$, defined by the tensor product of the individual variable registers.

A major algorithmic advantage of this representation is that quantum subtraction can be seamlessly reduced to addition through the standard arithmetic identity:
\begin{equation} \label{eq:twos_comp_subtraction}
    a - b = a + \overline{b} + 1,
\end{equation}
where $\overline{b}$ denotes the bitwise complement of $b$. In our quantum circuit design, this operation is realized by applying $X$ gates to each qubit of the register encoding $b$ prior to the addition. Crucially, the mandatory increment by one is absorbed at zero additional gate depth by simply initializing the controlled carry-in ($C_{\mathrm{in}}$) qubit of the quantum adder to the $\ket{1}$ state. This elegant reduction ensures that both addition and subtraction are handled uniformly within the exact same reversible framework, avoiding the need for dedicated subtraction primitives.

\subsection{Reversible Evaluation of Linear Diophantine Equations}
\label{subsec:linear_arithmetic}

Linear Diophantine equations constitute the simplest and most fundamental class of integer polynomial equations. As such, they provide a natural starting point for the development of coherent quantum evaluation techniques. We begin by constructing an explicit, fully reversible quantum circuit that evaluates a general linear integer function of the form:
\begin{equation}\label{eq:linear}
    f(x_1, x_2, \ldots, x_n) = c_0 + \sum_{i=1}^n c_i x_i.
\end{equation}

To evaluate the scalar multiplication of a variable $x_i$ by a classical integer coefficient $c_i$, we implement a highly optimized \textit{shift-and-add} architecture. By decomposing the absolute value of the coefficient into its binary expansion, $|c_i| = \sum_j p_j 2^j$ (with $p_j \in \{0,1\}$), the operation is reduced to conditionally adding $x_i$ shifted by $j$ positions for every $j$-th bit of $|c_i|$ that is $1$. This strategy bypasses the severe $\mathcal{O}(|c_i|)$ arithmetic overhead demanded by naive repeated addition approaches, securing an exponentially more efficient scaling. 

Crucially, the bit-shifts corresponding to $2^j$ do not require explicit quantum shift gates. They are implemented strictly through \textit{virtual rewiring}. As explicitly detailed in Figure~\ref{fig:re-wiring}, by dynamically routing the wires of the variable register $x_i$ into shifted input terminals of the addition network, we effectively multiply by powers of two at zero gate cost.

We adopt the Cuccaro--Draper--Kutin--Moulton (CDKM) ripple-carry adder~\cite{Cuccaro:2004xxx} as our core arithmetic primitive owing to its qubit-optimal circuit width. To maintain a strictly in-place evaluation, all partial products are accumulated directly into a single target evaluation register, $\ket{0}_F$. Since all inputs to the CDKM adder must have identical bit widths, a dynamic sign-extension is applied whenever $x_i$ is logically shifted. This is achieved by padding the least significant bits with $\ket{0}$ ancillas and copying the sign bit (the MSB of $x_i$) into the upper extension qubits using CNOT gates, ensuring that the two's complement representation remains globally consistent.

To handle negative coefficients ($c_i < 0$), we seamlessly leverage the subtraction identity of two's complement arithmetic previously introduced in Eq.~\eqref{eq:twos_comp_subtraction}. When $c_i$ is negative, we pre-invert the logically shifted $x_i$ register using $X$ gates. Furthermore, because ripple-carry adder architectures such as the CDKM inherently feature a carry-in ($C_{\mathrm{in}}$) register (see Figure~\ref{fig:re-wiring}), we assert a logical $\ket{1}$ on this $C_{\mathrm{in}}$ qubit. This efficiently computes the exact two's complement of the shifted variable on-the-fly, reducing subtraction to addition without introducing new arithmetic primitives. All ancillary entanglements (such as the MSB sign-extensions) and temporary $X$ inversions are strictly uncomputed locally after each addition step. This local uncomputation ensures that all auxiliary qubits are deterministically restored to their initial unentangled state, preventing the accumulation of residual garbage entanglement that would otherwise destroy the critical phase coherence required by the subsequent Grover diffusion operator.

The linear evaluation circuit is initialized by encoding the constant term $c_0$ into the function register $F$ using a layer of $X$ gates. The procedure then proceeds sequentially for each variable and its corresponding active bits, yielding the global unitary transformation:
\begin{equation} \label{function_operators}
        U_{\mathrm{linear}} \ket{\mathbf{x}}_{S}\ket{0}_{F} 
        = \ket{\mathbf{x}}_{S}\ket{f(\mathbf{x})}_{F}.
\end{equation}

By replacing generic, resource-heavy quantum multipliers with coefficient-specific shift-and-add routines and zero-cost virtual rewiring, this architecture minimizes both Toffoli gate complexity and ancillary qubit overhead. Formally, the number of required quantum additions to evaluate $c_i x_i$ is exactly given by the Hamming weight of the coefficient's binary expansion, denoted as $w_H(|c_i|)$, which corresponds to the number of non-zero bits. Since $w_H(|c_i|) \le \lfloor\log_2|c_i|\rfloor + 1$, the operation count is exponentially reduced compared to unary accumulation. Furthermore, a single in-place addition into the function register requires a Toffoli count that scales linearly with the active register width~\cite{Wang2025}. Because this active computational width is strictly bounded by the total number of logical qubits $q$, the worst-case non-Clifford cost to evaluate a single linear term is bounded by $\mathcal{O}(q \cdot w_H(|c_i|))$. Extending this to the complete linear polynomial, the cumulative non-Clifford gate complexity is strictly bounded by $\mathcal{O}\left(q \sum_{i=1}^n w_H(|c_i|)\right)$. Consequently, the global linear evaluation circuit exhibits a gate complexity that is strictly linear with respect to both the overall logical space $q$ and the collective Hamming weights of the coefficients. This theoretical bound is definitively validated through exact empirical circuit synthesis in Section~\ref{sec:results}.

\subsection{Reversible Evaluation of Quadratic Diophantine Equations}
\label{subsec:quadratic_arithmetic}

\begin{figure*}[t] 
    \centering 
    \includegraphics[width=1\textwidth]{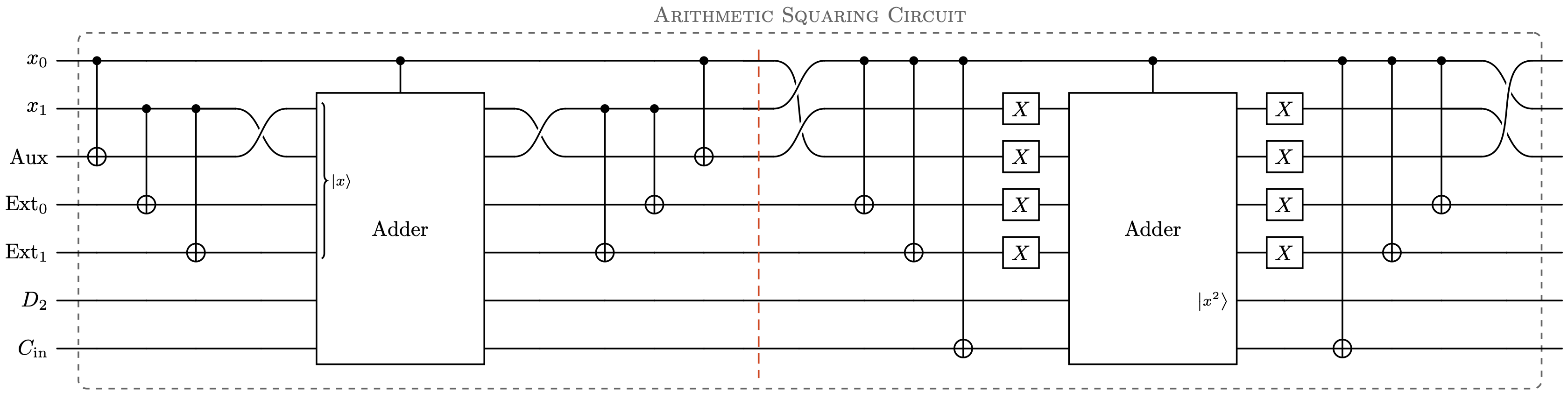}
    \captionsetup{font=small}
    \caption{\justifying \textbf{Signed quantum squaring circuit.} This schematic illustrates the in-place evaluation of a quadratic term. To prevent algorithmic self-reference conflicts, an ancillary qubit is temporarily entangled with the control bit via a CNOT gate, mediating its input to the adder. The final step, controlled by the sign bit (MSB), triggers the two's complement subtraction logic ($X$ inversions and $C_{\mathrm{in}}=\ket{1}$) to properly weight the negative MSB component.}
    \label{fig:squaring}
\end{figure*}

Having established an optimized reversible procedure for evaluating linear functions, we naturally extend the construction to quadratic equations. Quadratic forms introduce multiplicative interactions---specifically, self-squaring and cross-terms---while still admitting a structured polynomial decomposition:
\begin{equation}
    f(\mathbf{x}) = c_0 + \sum_{i=1}^n c_i x_i + \sum_{i=1}^n a_i x_i^2 + \sum_{i<j} b_{ij} x_i x_j.
\end{equation}

A conventional approach to evaluating non-linear terms involves generic reversible multipliers, which require storing intermediate full-width products (e.g., $a_i x_i^2$) in auxiliary data registers before further accumulation. This method is highly resource-inefficient. Instead, we propose a direct \textit{in-place accumulation} strategy. Because quadratic terms involve exactly two variable operands, we can generalize the linear shift-and-add algorithm into self-controlled and cross-controlled additions, allowing the entire quadratic form to be evaluated directly into the main function register $\ket{0}_F$ without relying on any intermediate full-width accumulators.

\subsubsection{In-Place Squaring Operator for Signed Integers}

To evaluate pure quadratic terms of the form $a_i x_i^2$, we employ a dedicated Squaring Operator that executes strictly in-place. Let the variable $x_i$ be encoded in a $w$-bit register, $\ket{x_{i,w-1} \dots x_{i,0}}$. The circuit essentially concatenates a sequence of $w$ controlled adders. At the $v$-th addition stage ($v \in \{0, \dots, w-1\}$), the register $x_i$ is logically shifted by $v$ positions via virtual rewiring and added to the accumulator $F$, with the addition strictly controlled by its own qubit $x_{i,v}$. This elegantly computes the partial product $x_{i,v} \cdot (x_i \cdot 2^v)$. 

However, implementing this naively violates a fundamental constraint of quantum circuit design: a single physical qubit cannot simultaneously act as the control of a unitary operation and as a data operand within the target register of that same unitary. To break this self-referential loop, a single ancillary qubit is temporarily entangled with the control qubit $x_{i,v}$ via a CNOT gate. The auxiliary qubit is then fed into the adder's input array, preserving the unitary nature of the operation while fully isolating the control line (see Figure~\ref{fig:squaring}).

Handling two's-complement arithmetic for non-linear terms presents an additional critical challenge. Because the MSB of a $w$-bit two's-complement integer carries a negative algebraic weight ($-x_{i,w-1} 2^{w-1}$), the final iteration of the squaring loop---when the control is the MSB---mathematically mandates a subtraction rather than an addition. Furthermore, if the global coefficient $a_i$ is itself negative, the entire addition/subtraction logic must be inverted. We resolve this elegantly through a unified Boolean parity approach. For any shift dictated by the coefficient and the variable, the decision to subtract the shifted register is governed by the logical XOR condition:
\begin{equation}
    s = x_{i, w-1} \oplus s_{a_i},
\end{equation}
where $x_{i, w-1}$ represents the MSB of the quantum variable and $s_{a_i} \in \{0,1\}$ is the classical sign bit of the coefficient $a_i$ ($1$ if negative, $0$ otherwise). If $s = 1$, the input array is bitwise inverted with $X$ gates and the carry-in qubit is activated ($C_{\mathrm{in}}=\ket{1}$), seamlessly transforming the controlled addition into a controlled subtraction. Dynamic sign-extension is applied exclusively to the upper padding qubits, preventing the corruption of the shifted lower-order bits.

\begin{table*}[t]
\centering
\renewcommand{\arraystretch}{1.3}
\setlength{\tabcolsep}{10pt}
\small
\begin{tabular}{|c||l|c|}
\hline
\textbf{Register} & \textbf{Description} & \textbf{Qubits} \\
\hline \hline
$S$   & System register encoding the $n$ variables $\mathbf{x} \in \mathcal{D}$ & $n \lceil \log_2 N \rceil$ \\[2pt]
$F$   & Main accumulator for the polynomial evaluation of $\ket{\mathbf{x}}_S$ & $\max_{1 \le j \le m} \left\lceil \log_2 \left(\sum_{\alpha} |c_{j,\alpha}| N^{|\alpha|}\right) \right\rceil$ \\[2pt]
$D_\ell$ & Intermediate monomial registers for degrees $\ell \in \{2, \ldots, d-1\}$ & $\sum_{\ell=2}^{d-1} \ell \lceil \log_2 N \rceil$ \\[2pt]
$R$   & Equations counter register & $\lceil\log_2{(2m+1)}\rceil$ \\[2pt]
\hline
\end{tabular}
\caption{\justifying Summary of the quantum registers used in the generalized algorithm. Each variable $x_i$ is encoded in two's complement to search the interval $x_i \in [-N/2, N/2 -1]$ for $i = 1, \ldots, n$. Because $F$ is uncomputed and dynamically recycled, it is sized to accommodate the maximum possible polynomial value across all $m$ equations (using multi-index notation $\alpha$). The intermediate registers $D_\ell$ follow an arithmetic progression to store higher-order monomials during the \emph{Compute} step. A constant $\mathcal{O}(1)$ number of ancillas are omitted for brevity.}
\label{tab:registers}
\end{table*}

\subsubsection{Cross-Term Evaluation}

The evaluation of cross terms $b_{ij} x_i x_j$ follows naturally from the squaring architecture. Instead of a variable controlling its own shifted addition, the qubits of $x_i$ act as controls for the logically shifted additions of the disjoint register $x_j$. Specifically, for every $1$-bit in the binary expansion of $|b_{ij}|$ at position $p$, and for every qubit $v \in \{0, \dots, w-1\}$ in $x_i$, the register $x_j$ is shifted by $p+v$ positions. The addition is controlled by $x_{i,v}$, effectively accumulating $x_{i,v} \cdot (x_j \cdot 2^{p+v})$ directly into $F$. 

Because the control ($x_i$) and the target data ($x_j$) reside in strictly disjoint quantum registers, the ancillary qubit required in the squaring operator is entirely obviated, further streamlining the circuit. The exact same XOR-based two's complement correction logic ($s = x_{i, w-1} \oplus s_{b_{ij}}$) dictates the subtraction mechanism.

From a resource scaling perspective, compiling the quadratic terms $x_i^2$ or $x_i x_j$ with a generic coefficient $c$ requires a nested shift-and-add approach. Multiplying two $w$-bit variables inherently necessitates $w$ controlled additions for each active bit of the scalar coefficient. Because a single controlled in-place addition requires a Toffoli depth that scales linearly with the active register width, the local non-Clifford gate complexity to evaluate a single quadratic term scales as $\mathcal{O}(w \cdot q \cdot w_H(|c|))$, where $q$ serves as the global upper bound for the function register width. Enclosing this strictly within the total problem size $q$, the analytical worst-case complexity scales as $\mathcal{O}(q^2 \cdot w_H(|c|))$. Extending this to a general polynomial comprising $T$ quadratic terms, the cumulative non-Clifford cost scales proportionately as $\mathcal{O}\left(q^2 \sum_{t=1}^T w_H(|c_t|)\right)$. Crucially, because this $\mathcal{O}(q^2)$ non-linear cost asymptotically eclipses the strictly linear $\mathcal{O}(q)$ overhead of the terms evaluated in Section~\ref{subsec:linear_arithmetic}, the total gate complexity of a full quadratic Diophantine system is entirely dominated by this quadratic bound.

It is paramount to emphasize that this $\mathcal{O}(q^2)$ boundary represents an absolute theoretical maximum. In practice, the actual number of participating qubits is strictly smaller than $q$ during early accumulation stages. Furthermore, because the arithmetic operations are quantumly controlled, zero-valued control states dynamically bypass the addition logic without incurring full arithmetic depth. This combination of bounded active widths and dynamic zero-skipping explains the highly efficient, sub-quadratic empirical scaling initially introduced in Figure~\ref{fig:resource_estimation_combined} and comprehensively validated in Section~\ref{sec:results}. Ultimately, by completely eliminating generic full-width multipliers and temporary accumulators, this architecture achieves a remarkable level of qubit efficiency, relying solely on the main $F$ register and a single ancilla qubit.

\subsection{Reversible Evaluation of General Polynomial Diophantine Equations}

The optimized in-place accumulation strategy developed for linear and quadratic equations forms the fundamental arithmetic engine for evaluating polynomials of arbitrary degree $d \ge 3$. The principal challenge in scaling to higher degrees lies in the generation of higher-order monomials (e.g., $x_i^3$ or $x_i^2 x_j$). Unlike quadratic terms, which can be accumulated directly into the main function register $F$ using a single ancilla qubit (as established in Figure~\ref{fig:squaring}), evaluating degree-$d$ terms requires sequential multiplications. Attempting to compute these sequentially into $F$ without intermediate storage breaks fundamental reversibility constraints, while relying on generic full-width sequential multipliers wastes considerable gate depth.

To resolve this bottleneck, we introduce a structured \textit{Compute-Utilize-Uncompute} architecture inspired by Bennett's reversible computing paradigm~\cite{10.1147/rd.176.0525}. As detailed in Table~\ref{tab:registers}, rather than using full-width auxiliary accumulators, we provision a hierarchical set of intermediate, tightly bounded quantum registers, denoted as $D_\ell$ for $\ell \in \{2, \ldots, d-1\}$. Each register $D_\ell$ is strictly sized to $\ell \cdot \log_2N$ qubits, which is the exact precision required to hold a pure monomial of degree $\ell$ without arithmetic overflow. This strategy severs the spatial complexity's dependence on the number of variables $n$, limiting the auxiliary circuit width to $\mathcal{O}(d^2 \cdot \log_2N)$ (see Theorem~\ref{theor:complexity_total} for the formal derivation).

The evaluation of a general higher-order term, such as $c \cdot x_1 x_2 \cdots x_d$, proceeds systematically through three distinct quantum steps:

\begin{enumerate}[label=\arabic*), leftmargin=2.2em]

  \item[\bf (1)] \emph{Compute Step (Monomial Generation).}  
        The pure underlying monomial (ignoring the global coefficient $c$) is sequentially built up in the intermediate registers by repeatedly invoking the quadratic multiplication engine detailed in Figure~\ref{fig:squaring}. Starting from the base variables in $S$, local shift-and-add operations construct $x_1 x_2$ into $D_2$. Subsequently, $x_3$ acts as a control to shift-and-add $D_2$ into $D_3$. This localized cascade continues until the degree $(d-1)$ monomial is successfully prepared in register $D_{d-1}$. 
        
  \item[\bf (2)] \emph{Utilize Step (In-Place Accumulation).}  
        With the intermediate monomial prepared, the sub-register encoding the final variable $x_d$ acts as the control, while $D_{d-1}$ acts as the target register. We perform a reversible, controlled multiplication between the quantum state of the variable $\ket{x_d}$ and the intermediate state $\ket{ x_1 x_2\cdots x_{d-1}}_{D_{d-1}}$, dynamically accumulating the resulting product directly into the main function register $F$. Crucially, it is solely during this final accumulation step that the absolute global coefficient $|c|$ determines the exact number of arithmetic shifts---strictly dictated by its Hamming weight---while the XOR-based two's complement logic dynamically governs the sign of the addition.
            
  \item[\bf (3)] \emph{Uncompute Step (Garbage Uncomputation).}  
        To preserve strict reversibility and free the intermediate registers for the evaluation of subsequent polynomial terms, the \textit{Compute Step} is perfectly inverted. By applying the inverse local adders in reverse order, all intermediate registers from $D_{d-1}$ down to $D_2$ are deterministically returned to the computational zero state $\ket{0\dots 0}_{D_\ell}$. 
\end{enumerate}

From a resource scaling perspective, this architecture isolates the computational burden of the coefficient $c$ exclusively to the Utilize Step. The Compute and Uncompute steps are executed using purely unweighted ($c=1$) shift-and-add operations to build and dismantle the raw monomial. Because this process decomposes any specific degree-$d_t$ monomial into a cascade of pairwise operations, generating its intermediate states requires $\mathcal{O}(d_t)$ sequential multiplications. Crucially, the non-Clifford gate complexity of each intermediate multiplication is bounded by the active qubit width of the target register, which spans exactly $\ell \log_2 N$ qubits during computational step $\ell$.

While the cumulative gate count naturally grows with the polynomial degree, this sequential circuit overhead is structurally decoupled from the coefficient's Hamming weight, $w_H(|c_t|)$. To prepare each specific monomial, the circuit executes a sequence of multiplications, each incurring a gate cost that scales quadratically with the number of active qubits involved. This underlying quadratic scaling propagates throughout the entire evaluation pipeline. As a result, the maximum global cost to evaluate the full system of $m$ equations remains bounded by an $\mathcal{O}(q^2)$ complexity, driven by the actual number of intermediate monomials generated and the cumulative sum of the Hamming weights of their coefficients. This explains the non-intuitive reality that a single low-degree term with a massive, dense coefficient can legitimately demand more quantum resources than an entire system of multiple sparse, high-degree equations. Ultimately, compartmentalizing these parameters prevents multiplicative complexity explosions; because the maximum polynomial degree $d$ is typically far smaller than the total logical qubit count $q$ in practical applications, this overall $\mathcal{O}(q^2)$ spatial boundary dictates the true operational bottleneck of the architecture.

It is crucial to emphasize that this framework provides a generic, garbage-free quantum circuit for evaluating any polynomial over integer coefficients. The overall transformation manifests as the unitary operator:
\begin{equation} \label{eq:diof_operator}
        U_{f_j} \ket{\mathbf{x}}_{S}\ket{0}_{F}\ket{0}_{D_\ell} 
        = \ket{\mathbf{x}}_{S}\ket{f_j(\mathbf{x})}_{F}\ket{0}_{D_\ell}.
\end{equation}
This coherent polynomial evaluator is entirely agnostic to the broader quantum algorithm encompassing it. While utilized in this work as the oracle engine for Grover-based amplitude amplification, it operates as a universal arithmetic primitive. It is directly applicable to quantum walks over discrete graphs, Hamiltonian simulation of non-linear physical models, or any other quantum computational routine requiring coherent non-linear integer arithmetic.

\section{Oracle Implementation} \label{oracle_implementation}

Building upon the coherent arithmetic framework developed in Section~\ref{sec:quantum_representation}, we now detail the explicit construction of the complete Diophantine oracle required for Grover-based amplitude amplification. The role of this oracle is to coherently identify integer assignments $\mathbf{x}$ that satisfy all equations of the Diophantine system defined in Eq.~\eqref{eq:syst_dioph}, while strictly preserving reversibility to enable proper phase kickback onto the system register.

\subsection{Inequality Mapping and Sequential Logic}\label{oracle_implementation_A}

Given a candidate integer vector $\mathbf{x}$ encoded in the system register $S$, the oracle evaluates the Diophantine system and coherently accumulates the total number of satisfied conditions into a dedicated counter register $R$. To minimize the spatial circuit width, the oracle processes the constraints sequentially, systematically recycling a single main function register $F$.

However, checking standard equality constraints of the form $f_j(\mathbf{x}) = 0$ poses a severe hardware bottleneck. A naive implementation would compute the polynomial into $F$ and check if all its qubits are simultaneously in the $\ket{0}_F$ state. This necessitates a massive multi-controlled Toffoli gate whose depth scales linearly with the register width, destroying the efficiency of the oracle.

To drastically reduce this circuit complexity, we leverage the inequality mapping mechanism introduced in \cite{metropolis_ILP}. Based on Theorem 1 of \cite{metropolis_ILP}, any equality constraint $f_j(\mathbf{x}) = 0$ can be rigorously reformulated as a conjunction of two inequality constraints: $f_j(\mathbf{x}) \ge 0$ and $-f_j(\mathbf{x}) \ge 0$. As proven in the cited work, this mathematical reformulation guarantees a fundamental shift in the verification bottleneck: rather than requiring a macroscopic multi-controlled equality gate across the entire evaluation register, the problem is reduced to a simple binary sign-bit inspection, drastically curbing the overall gate complexity.

Consequently, the original system of $m$ equality constraints is expanded into a set of $2m$ inequality conditions. The counter register $R$ must therefore be sized to $\lceil \log_2(2m+1) \rceil$ qubits to represent states from $\ket{0}$ to $\ket{2m}$. The resulting global evaluation transformation takes the form:
\begin{equation}\label{rest_operator}
    \ket{\mathbf{x}}_{S}\ket{0}_{R}
    \;\longmapsto\;
    \begin{cases}
        \ket{\mathbf{x}}_{S}\ket{2m}_{R}, & \text{if } \mathbf{x} \in \Omega, \\[4pt]
        \ket{\mathbf{x}}_{S}\ket{r}_{R}, & \text{if } \mathbf{x} \notin \Omega,
    \end{cases}
\end{equation}
where $\Omega$ denotes the feasible set of complete integer solutions within the bounded domain, and $r < 2m$ is the number of inequality conditions satisfied by an infeasible point.

The profound advantage of this inequality mapping lies in its quantum implementation. In our two's-complement encoding, verifying a non-negativity condition trivially reduces to inspecting the sign bit (the MSB) of the function register. If the MSB is $\ket{0}$, the computed value is non-negative. Thus, assuming an arithmetic unitary $U_{f}$ computes a generic polynomial into $F$, we can construct a conditional update operator $A_{\text{ineq}}$ that acts concisely as:
\begin{equation}
    A_{\text{ineq}} \ket{\mathbf{x}}_{S} \ket{f (\mathbf{x})}_{F} \ket{r}_{R} := \ket{\mathbf{x}}_{S} \ket{f (\mathbf{x})}_{F} \ket{r + \delta_{\mathrm{MSB}, 0}}_{R},
\end{equation}
where $\delta_{\mathrm{MSB}, 0}$ evaluates to $1$ if the MSB is $\ket{0}$. 

Operationally, this test is realized by a single anti-controlled increment on the counter register $R$, targeted exclusively by the MSB of the function register. This completely eliminates the need for full-width multi-controlled operations, streamlining the equality check into an $\mathcal{O}(1)$ depth gate overhead relative to the function register size.

To evaluate the complete $j$-th equality constraint $f_j(\mathbf{x}) = 0$ without accumulating garbage entanglement, we compute, check, and uncompute both of its corresponding inequalities sequentially. We encapsulate this entirely into a single composite unitary $F_j$:
\begin{equation} \label{F_operator}
    F_j := \left( U_{-f_j}^{\dagger} A_{\text{ineq}} U_{-f_j} \right) \left( U_{f_j}^{\dagger} A_{\text{ineq}} U_{f_j} \right).
\end{equation}
Applying $F_j$ evaluates $f_j(\mathbf{x}) \ge 0$, increments the counter if true, uncomputes the register $F$, evaluates $-f_j(\mathbf{x}) \ge 0$, increments the counter if true, and uncomputes again. If $\mathbf{x}$ is a root of the equation, the counter is incremented exactly by $2$. If $f_j(\mathbf{x}) \neq 0$, it is incremented exactly by $1$ (since any non-zero integer is either strictly positive or strictly negative).

By sequentially applying the operators $F_j$ for $j=1,\ldots,m$, the overall counting operator $\prod_{j=1}^{m} F_j$ is assembled. The target state $\ket{2m}_R$ uniquely identifies valid global solutions that satisfy all equations. To complete the Grover oracle $O_f$, this specific state triggers a multi-controlled $Z$ gate (or an equivalent controlled phase flip) to apply a $-1$ phase exclusively to the solutions. Finally, the entire counting sequence is applied in reverse to uncompute the counter register back to $\ket{0}_R$, guaranteeing that the oracle acts unitarily and phase-kicks solely onto the system register. This complete, garbage-free algorithmic orchestration is depicted visually in the circuit schematic of Figure~\ref{fig:Grover_iteration}.

\subsection{Resource Estimation and Spatial Complexity} \label{subsec:complexity}

We now formalize the resource requirements derived throughout Sections \ref{sec:quantum_representation} and \ref{oracle_implementation} into a comprehensive bound encompassing both the spatial and non-Clifford gate complexities of the oracle.

\begin{theorem}[Space and Gate Complexity of the Bounded Diophantine Search Algorithm]\label{theor:complexity_total}
Consider a system of $m$ non-linear Diophantine polynomial equations in $n$ integer variables, of maximum degree $d \ge 2$, defined over a search domain $\mathcal{D} \subset \mathbb{Z}^n$ where each variable is restricted to an interval of length $N$. 
Let $\alpha = (\alpha_1, \dots, \alpha_n) \in \mathbb{N}^n$ be a tuple such that $\mathbf{x}^\alpha = x_1^{\alpha_1} \cdots x_n^{\alpha_n}$, and let $|\alpha| = \sum_{i=1}^n \alpha_i$ denote the total degree of the monomial. We express each polynomial as:
\begin{equation}
    f_j(\mathbf{x}) = \sum_{\alpha} c_{j,\alpha} \mathbf{x}^\alpha, \qquad |\alpha| \le d.
\end{equation}
Let $w_H(|c_{j,\alpha}|)$ denote the Hamming weight of the coefficient $c_{j,\alpha}$. Under the reversible compute-utilize-uncompute architecture, the complete Diophantine oracle $O_f$ can be implemented with a total number of logical qubits $q$ strictly bounded by:
\begin{equation}\label{eq:cost_total}
\begin{split}
    q &= n \lceil \log_2 N \rceil 
    + \max_{1 \le j \le m} \left\lceil \log_2 \left( \sum_{\alpha} |c_{j,\alpha}| N^{|\alpha|} \right) \right\rceil \\
    &\quad + \left( \sum_{\ell=2}^{d-1} \ell \lceil \log_2 N \rceil \right) 
    + \lceil \log_2(2m+1) \rceil + \mathcal{O}(1).
\end{split}
\end{equation}
In particular, the overall space complexity scales asymptotically as:
\begin{equation}\label{eq:cost_asint}
    q = \mathcal{O}\big((n + d^2)\log_2 N\big),
\end{equation}
where the bit-width contribution of the coefficient magnitudes acts strictly as a decoupled additive constant.

Furthermore, the non-Clifford gate complexity for a single complete Grover iteration satisfies:
\begin{equation}
    C_{\mathrm{Toff}}^{\mathrm{iter}} = \mathcal{O}(q^2 \Lambda),
\end{equation}
where $\Lambda$ captures the combined linear dependence on the complexity of the polynomial degrees and the Hamming weights of all coefficients across the system. A compatible explicit choice is:
\begin{equation}
    \Lambda \asymp \sum_{j,\alpha} \big(|\alpha| + w_H(|c_{j,\alpha}|)\big),
\end{equation}
where the symbol $\asymp$ denotes asymptotic equivalence in order of magnitude, i.e., $a \asymp b$ means $a = \Theta(b)$.

If the number of valid solutions is $M \ge 1$, then the total non-Clifford gate complexity for recovering one satisfying assignment without prior calibration scales as:
\begin{equation}
    C_{\mathrm{Toff}}^{\mathrm{total}} = \mathcal{O} \! \left( \sqrt{\frac{N^n}{M}} \, q^2 \Lambda \right).
\end{equation}
In the unique-solution case ($M=1$), this evaluates to the maximal search bound:
\begin{equation}
    C_{\mathrm{Toff}}^{\mathrm{total}} = \mathcal{O}\big(N^{n/2} \, q^2 \Lambda\big).
\end{equation}
\end{theorem}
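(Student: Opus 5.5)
The proof splits into three accounting tasks: qubit count, Toffoli count per Grover iteration, and the outer search cost.

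\textbf{Space bound.} We sum the register sizes of Table~\ref{tab:registers}. The system register $S$ has $n\lceil\log_2 N\rceil$ qubits, since each variable in $[-N/2,N/2-1]$ needs $\lceil\log_2 N\rceil$ two's-complement bits. For $F$, we argue it is recycled across the $m$ equations, because each $F_j$ in Eq.~\eqref{F_operator} returns $F$ to $\ket{0}$. Hence one register sized to hold $\max_j|f_j(\mathbf{x})|$ suffices. Using $|\mathbf{x}^\alpha|\le (N/2)^{|\alpha|}\le N^{|\alpha|}$, this gives the stated logarithm, plus a sign bit absorbed into $\mathcal{O}(1)$. The $D_\ell$ registers are likewise recycled term by term by the Uncompute step, so their total width is $\sum_{\ell=2}^{d-1}\ell\lceil\log_2N\rceil$. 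The counter needs $\lceil\log_2(2m+1)\rceil$ qubits, and the squaring ancilla, carry qubits and the phase-flip target contribute $\mathcal{O}(1)$.

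For the asymptotic form, $\sum_{\ell\le d-1}\ell=\mathcal{O}(d^2)$. The $F$ term is at most $d\log_2 N+\log_2(\sum_\alpha|c_{j,\alpha}|)$, where the first piece is absorbed into $d^2\log_2N$ and the second is the coefficient additive constant. The $\log_2 m$ term is treated as lower order, an assumption I would state explicitly.

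\textbf{Gate bound per iteration.} Fix a monomial $c\,\mathbf{x}^\alpha$ of degree $k=|\alpha|$.
\begin{itemize}
\item The Compute step consists of $k-2$ multiplications into $D_\ell$. Each is a sequence of $\lceil\log_2N\rceil\le q$ controlled additions, and each CDKM addition costs $\mathcal{O}(q)$ Toffolis, since the controlled version only adds a constant factor per bit. So each multiplication costs $\mathcal{O}(q^2)$.
\item The Uncompute step costs the same as Compute.
\item The Utilize step performs $\lceil\log_2N\rceil\,w_H(|c|)$ controlled additions into $F$, costing $\mathcal{O}(q^2 w_H(|c|))$.
\end{itemize}
So the monomial costs $\mathcal{O}(q^2(k+w_H(|c|)))$. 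Linear terms cost $\mathcal{O}(q\,w_H)$ and are dominated.

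Each equation is evaluated four times in $F_j$ ($U_{\pm f_j}$ and their inverses), and the counting sequence is applied twice, forward and reverse. These are constant factors. $A_{\rm ineq}$ is an anti-controlled increment on a register of $\mathcal{O}(\log m)$ qubits. The final multi-controlled $Z$ on $R$ and the diffusion $D$ each cost $\mathcal{O}(q)$ Toffolis, which is dominated. Summing over all $(j,\alpha)$ yields $\mathcal{O}(q^2\Lambda)$.

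\textbf{Total cost.} When $M$ is known, $\mathcal{O}(\sqrt{N^n/M})$ iterations suffice. For unknown $M$, I would invoke the exponentially increasing randomized schedule of Boyer--Brassard--H{\o}yer--Tapp, whose expected total number of oracle calls is $\mathcal{O}(\sqrt{|\mathcal{D}|/M})$ with $|\mathcal{D}|=N^n$. Multiplying by the per-iteration cost gives the result, and $M=1$ gives the last display.

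\textbf{Main obstacle.} The hardest step is justifying that every controlled addition costs $\mathcal{O}(q)$ Toffolis. This requires the operand widths in $D_\ell$ and $F$ to stay within $q$, with sign extension costing only CNOTs. It also requires a careful argument that the sign-dependent XOR subtraction logic is controlled by a qubit rather than requiring extra multi-controlled gates. Here I would argue that the $X$ inversions conditioned on $x_{i,w-1}$ become CNOTs. Second, I must check that the $\log m$ and coefficient-width terms are legitimately absorbed in Eq.~\eqref{eq:cost_asint} under the stated convention.
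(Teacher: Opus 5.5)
Your proposal is correct and follows the paper's own proof essentially step for step. You sum the widths of $S$, the recycled $F$, the recycled $D_\ell$ workspace and the counter $R$. You then bound each pairwise controlled shift-and-add multiplication by $\mathcal{O}(q^2)$ Toffolis and the coefficient-weighted Utilize step by $\mathcal{O}(q^2\,w_H(|c_{j,\alpha}|))$, sum over all monomials to get $\mathcal{O}(q^2\Lambda)$, and multiply by $\mathcal{O}(\sqrt{N^n/M})$ iterations.

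Where you differ from the paper, it is only in being more explicit:
\begin{itemize}
\item you count the four evaluations inside $F_j$ and the reverse counting pass as constant factors;
\item you invoke the BBHT schedule to justify ``without prior calibration'', whereas the paper's proof simply cites the Grover iteration count;
\item you flag that the $\lceil\log_2(2m+1)\rceil$ and coefficient-bit-length terms must be treated as lower order for Eq.~\eqref{eq:cost_asint}, which the paper absorbs silently.
\end{itemize}
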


\begin{proof}
We begin with the logical-qubit count.  
Each of the $n$ independent variables is encoded in binary arithmetic, requiring exactly $\lceil \log_2 N \rceil$ qubits to cover its interval of length $N$. Consequently, the system register $S$ contributes exactly $n\lceil \log_2 N \rceil$ qubits. 

The oracle evaluates the polynomial system sequentially by accumulating monomial contributions into a shared, bounded function register $F$. Since the absolute value of any variable within an interval of length $N$ can be conservatively upper-bounded by $N$ (regardless of its geometric offset from zero), the magnitude of any evaluated monomial satisfies:
\begin{equation}
    |c_{j,\alpha}\mathbf{x}^\alpha| \le |c_{j,\alpha}|\,N^{|\alpha|}.
\end{equation}
Because $F$ is uncomputed and dynamically recycled between equations, its maximum required capacity is dictated solely by the specific equation spanning the largest numerical range. Therefore, the function register size is strictly bounded by
\begin{equation}
    \max_{1 \le j \le m} \left\lceil \log_2\!\left(\sum_{\alpha}|c_{j,\alpha}|\,N^{|\alpha|}\right)\right\rceil
\end{equation}
qubits for the magnitude, plus a strictly constant $\mathcal{O}(1)$ overhead to accommodate sign bits, carry-ins, and temporary routing ancillas.

Next, consider the intermediate monomial registers used by the compute-utilize-uncompute architecture. For each intermediate degree $\ell \in \{2,\dots,d-1\}$, the construction provisions a dedicated register $D_\ell$ of width $\ell\lceil \log_2 N \rceil$ to prevent overflow during intermediate monomial generation. Therefore, the full intermediate workspace requires
\begin{equation}
    \sum_{\ell=2}^{d-1} \ell \lceil \log_2 N \rceil
\end{equation}
qubits. In addition, the inequality-checking stage utilizes a counter register of size $\lceil \log_2(2m+1)\rceil$ to encode the $2m$ inequalities. 

Summing all architectural components yields the exact global space requirement:
\begin{equation}
\begin{split}
    q \;&=\; n\lceil \log_2 N \rceil \;+\; \max_{1 \le j \le m} \left\lceil \log_2\!\left(\sum_{\alpha}|c_{j,\alpha}|\,N^{|\alpha|}\right)\right\rceil \\ 
    &\quad +\; \sum_{\ell=2}^{d-1} \ell \lceil \log_2 N \rceil \;+\; \lceil \log_2(2m+1)\rceil \;+\; \mathcal{O}(1).
\end{split}
\end{equation}
Asymptotically, the capacity of the function register contributes a spatial overhead strictly bounded by $\mathcal{O}\big(d \log_2 N + \max_j \log_2 ( \sum_{\alpha} |c_{j,\alpha}| )\big)$. Furthermore, the arithmetic progression of the intermediate workspace reduces to $\lceil \log_2 N \rceil \sum_{\ell=2}^{d-1} \ell = \mathcal{O}(d^2 \log_2 N)$. Because the $d \log_2 N$ evaluation cost is asymptotically absorbed by the larger $d^2 \log_2 N$ intermediate overhead, the global space complexity isolates the coefficient bit-length as a decoupled, additive factor. Consequently, the total logical qubit count scales cleanly as:
\begin{equation}
    q = \mathcal{O}\big((n+d^2)\log_2 N\big).
\end{equation}

We now bound the non-Clifford gate complexity of one complete Grover iteration. A Grover iteration consists of one application of the Diophantine oracle $O_f$ followed by the diffusion operator $D$. The diffusion step is a standard reflection over the system register, requiring an $\mathcal{O}(n \log_2 N)$ gate depth. This is asymptotically negligible compared to the arithmetic oracle, allowing us to absorb it and focus strictly on compiling $O_f$. 

Consider a single monomial term $c_{j,\alpha}\mathbf{x}^\alpha$ with total degree $|\alpha|$. Under the reversible architecture, the pure monomial is built through a sequence of pairwise controlled multiplications across bounded-width registers, followed by a coefficient-weighted accumulation into the main function register, and finally the inverse sequence that uncomputes all temporary data. Each bounded-width multiplication step is implemented through nested controlled shift-and-add operations. Because the arithmetic bottleneck for an in-place quantum addition scales linearly with the active register width, and every active register is strictly bounded by the total logical workspace $q$, the non-Clifford Toffoli depth of each pairwise multiplication is bounded by $\mathcal{O}(q^2)$. 

Moreover, evaluating a degree-$|\alpha|$ monomial requires $\mathcal{O}(|\alpha|)$ sequential stages. For $|\alpha| \ge 2$, the first $|\alpha|-1$ stages are intermediate multiplications between bounded variables, contributing a Toffoli depth of $\mathcal{O}(|\alpha| q^2)$. The final stage simultaneously multiplies the accumulated intermediate term by the final variable and the coefficient $c_{j,\alpha}$, accumulating the result into the function register. Because this final step is a nested controlled-addition modulated by the active bits of the coefficient, its Toffoli depth is exactly $\mathcal{O}(q^2 \, w_H(|c_{j,\alpha}|))$. Consequently, the local operational cost to evaluate and uncompute a single term $c_{j,\alpha}\mathbf{x}^\alpha$ naturally factors to:
\begin{equation}
    \mathcal{O}\!\big(|\alpha| q^2 + q^2 \, w_H(|c_{j,\alpha}|)\big) = \mathcal{O}\!\Big(q^2 \big(|\alpha| + w_H(|c_{j,\alpha}|)\big)\Big).
\end{equation}
Summing this exact complexity across all monomials in all $m$ equations yields the total oracle cost:
\begin{equation}
    C_{O_f} = \mathcal{O}\!\left( q^2 \sum_{j=1}^m \sum_{\alpha}\big(|\alpha| + w_H(|c_{j,\alpha}|)\big) \right).
\end{equation}
Therefore, by defining the composite arithmetic penalty parameter as
\begin{equation}
    \Lambda \asymp \sum_{j, \alpha}\big(|\alpha| + w_H(|c_{j,\alpha}|)\big),
\end{equation}
the total non-Clifford Toffoli cost of one Grover iteration satisfies
\begin{equation}
    C_{\mathrm{Toff}}^{\mathrm{iter}} = \mathcal{O}(q^2\,\Lambda).
\end{equation}
This formulation makes explicit that the dependence on the number of equations is purely additive, as the oracle processes the constraints sequentially, and that the coefficient overhead enters logarithmically via their Hamming weights. 

Finally, let $M\ge 1$ be the number of valid assignments within the bounded search space. Since the search space has a total cardinality of $N^n$, Grover amplitude amplification recovers one valid solution using $\mathcal{O}\!\big(\sqrt{N^n/M}\big)$ iterations. Multiplying this by the cost of one iteration gives the global algorithmic complexity:
\begin{equation}
    C_{\mathrm{Toff}}^{\mathrm{total}} = \mathcal{O}\!\left(\sqrt{\frac{N^n}{M}}\;q^2\,\Lambda\right).
\end{equation}
In the unique-solution case ($M=1$), this evaluates to the maximal search bound:
\begin{equation}
    C_{\mathrm{Toff}}^{\mathrm{total}} = \mathcal{O}\!\big(N^{n/2}q^2\,\Lambda\big).
\end{equation}
This completes the proof. 
\end{proof}

While the architecture is explicitly formulated for the symmetric two's complement interval $[-N/2, N/2 - 1]$, the derived asymptotic bounds are universally applicable to any contiguous search domain of length $N$, such as the unsigned interval $[0, N-1]$. Shifting the search space requires only an affine translation of the variables. Because our space complexity derivation conservatively bounds the maximum monomial magnitudes using the full interval width $N^{|\alpha|}$, both the overall logical qubit count $q$ and the associated Toffoli depth $\mathcal{O}(q^2 \Lambda)$ remain strictly invariant. The fundamental complexity is dictated entirely by the logarithmic width of the interval, not its geometric offset from zero.

This rigorous theoretical bound physically circumvents the massive spatial overhead typical of generic polynomial compilations, and is perfectly corroborated by the exact empirical resource scaling previously introduced in the left and right panel of Figure~\ref{fig:resource_estimation_combined}.

\subsection{Search Strategies for an Unknown Number of Solutions}
\label{sec:quantum_counting}

In practical applications, the number of valid solutions $M$ residing within the bounded search space of size $|\mathcal{D}| = N^n$ is typically unknown \textit{a priori}. The central difficulty in this regime is the calibration of the Grover iteration count: applying too few iterations under-amplifies the marked subspace, whereas applying too many leads to the well-known overshooting phenomenon, where the probability amplitude rotates past the target state and diminishes. To resolve this without sacrificing the quantum speedup, we tailor our approach to address three distinct operational objectives: finding a single satisfying assignment, iteratively enumerating all solutions, or determining their exact global count $M$.

\subsubsection{Iterative Retrieval and Solution Enumeration}

When the objective is to find a single feasible solution without explicitly counting the total volume first, one must employ quantum search procedures designed for an unknown number of marked states. Specifically, we adopt the BBHT search strategy~\cite{Boyer1998}. Although $M$ is unknown, this algorithm rigidly guarantees finding a solution by executing the Grover operator $G$ with a geometrically increasing, randomized sequence of iteration counts. The procedure operates as follows:

\begin{enumerate}[label=\arabic*), leftmargin=2.2em]
  \item[\bf (1)] Initialize a dynamic upper bound for the iterations $T=1$, and set a constant geometric growth factor $\lambda = 6/5$. This factor dictates the expansion rate of the search window; as established in Ref.~\cite{Boyer1998}, choosing any $1 < \lambda < 4/3$ safely prevents the schedule from skipping the optimal iteration regime.
  
  \item[\bf (2)] Choose an iteration count $j$ uniformly at random from the integer range $[0, T)$.
  
  \item[\bf (3)] Apply $j$ iterations of the operator $G$ to the initial uniform superposition and measure the system register $S$.
  
  \item[\bf (4)] If the measured state $\mathbf{x}^*$ is a feasible solution, output $\mathbf{x}^*$ and exit. (Note that while exhaustively searching the vast bounded domain $|\mathcal{D}|$ is classically intractable, evaluating the polynomial constraints to verify a measured candidate solution takes polynomial time, perfectly reflecting the verification asymmetry inherent to NP problems).
  
  \item[\bf (5)] Otherwise, update the boundary as $T \leftarrow \min(\lambda T, \sqrt{|\mathcal{D}|})$ and return to Step 2.
\end{enumerate}
As formally proven in Theorem 3 of Ref.~\cite{Boyer1998}, this geometric schedule guarantees finding a solution in an expected $\mathcal{O}(\sqrt{|\mathcal{D}|/M})$ queries without any prior calibration.

This iterative variant is especially powerful when the objective is to enumerate solutions rather than merely certify their existence. Upon measuring a valid assignment $\mathbf{x}^*$, we adopt a dynamic recovery strategy: the oracle is augmented to a new state $O_f'$ to explicitly exclude that specific assignment from subsequent searches. In our architecture, this is achieved by appending a multi-controlled X (MCX) gate---triggered exclusively when the system register $S$ is in state $\ket{\mathbf{x}^*}$---targeting the counter register $R$. By unitarily shifting the counter for this specific state, $\ket{\mathbf{x}^*}$ will tally fewer than the required $m$ satisfied equations, ensuring it is no longer marked by the phase flip.

Because the bounded search space contains $M$ feasible solutions, recovering all of them one by one requires a sequence of BBHT searches with exactly $M, M-1, \dots, 1$ marked states. The resulting total query complexity of this enumeration strategy scales as:
\begin{equation}\label{eq:iterative_sum}
    T_{\text{total}} = \sum_{r=1}^{M} \mathcal{O}\left(\sqrt{\frac{|\mathcal{D}|}{r}}\right) = \mathcal{O}\big(\sqrt{|\mathcal{D}| M}\big).
\end{equation}
This bound strictly preserves the quadratic quantum advantage over classical exhaustive enumeration, $\mathcal{O}(|\mathcal{D}|)$, whenever $M \ll |\mathcal{D}|$.

\subsubsection{Quantum Counting}

Alternatively, when the primary goal is to determine the exact number of solutions, the operator $G$ is integrated into a Quantum Counting framework~\cite{Brassard1998}. This framework interprets $G$ as a unitary transformation whose eigenvalues $e^{\pm i 2\theta}$ encode the ratio of marked states via $\sin^2(\theta) = M/|\mathcal{D}|$. By applying the Quantum Phase Estimation (QPE) primitive~\cite{Kitaev1995}---which utilizes an inverse Quantum Fourier Transform (QFT) to map the eigenphase $\theta$ onto an auxiliary evaluation register---one can coherently infer the exact value of $M$. The query complexity of this exact counting procedure fundamentally scales as $\mathcal{O}(\sqrt{|\mathcal{D}|})$ calls to the oracle $O_f$.

However, if the goal is merely to calibrate a subsequent deterministic search, an exact count is unnecessary; a multiplicative approximation of $M$ is sufficient to prevent overshooting. While the original Quantum Counting algorithm~\cite{Brassard1998} achieves this approximate counting to within a relative error $\epsilon$ in an optimal $\mathcal{O}\left(\frac{1}{\epsilon}\sqrt{\frac{|\mathcal{D}|}{M}}\right)$ queries, it fundamentally relies on the QFT, imposing a substantial spatial overhead due to the massive auxiliary evaluation register required.

To rigorously preserve the strict qubit-efficiency of our architecture, one may instead employ modern QFT-free approximate counting algorithms. For instance, the simplified protocol analyzed by Aaronson and Rall~\cite{Aaronson2020} achieves the identical optimal query complexity of $\mathcal{O}\left(\frac{1}{\epsilon}\sqrt{\frac{|\mathcal{D}|}{M}}\right)$ using exclusively variable-length sequences of standard Grover iterations alongside classical post-processing. By circumventing the phase estimation circuitry entirely, this contemporary paradigm completely eliminates the spatial overhead of the QFT, perfectly aligning with our memory-efficient design philosophy.

Regardless of the approximate estimation subroutine employed, the resulting estimate $\tilde{M}$ can subsequently be used to calibrate a deterministic retrieval run. Because the geometric rotation per Grover iteration is exactly $2\theta$, the required depth to reach the target subspace is given by:
\begin{equation}\label{eq:exact_grover_depth}
    L = \left\lfloor \frac{\pi}{4\arcsin(\sqrt{\tilde{M}/|\mathcal{D}|})} \right\rfloor,
\end{equation}
where the widely used $\frac{\pi}{4}\sqrt{|\mathcal{D}|/\tilde{M}}$ acts merely as its small-angle approximation. This exact calibration enables deterministic solution retrieval in a final, perfectly tuned search run.

\subsection{Classical vs. Quantum Complexity Separation}
\label{subsec:classical_quantum_comparison}

A persistent vulnerability in theoretical proposals for quantum search is the assumption of a ``black-box'' oracle with a constant $\mathcal{O}(1)$ execution cost. In practice, synthesizing the verification logic into a fully reversible, fault-tolerant quantum circuit can introduce massive spatial and temporal overheads. If the depth of the explicitly compiled oracle grows exponentially with the problem parameters, the theoretical quantum speedup is entirely nullified. The primary significance of our explicit architectural framework is the rigorous demonstration that this compilation overhead is strictly bounded.

To formulate a rigorous head-to-head comparison, we must evaluate both classical and quantum costs using a unified metric of fundamental logical operations. In classical boolean arithmetic, the asymptotic algorithmic bottleneck is governed by its multiplicative complexity---the number of non-linear logic gates (e.g., AND gates) required to evaluate the polynomial system~\cite{Brgisser1997, 10.5555/35517}. Quantumly, within a fault-tolerant regime, the non-Clifford Toffoli count serves as the direct mathematical equivalent for reversible logic, as Toffoli gates inherently implement logical conjunctions while Clifford operations remain asymptotically subdominant~\cite{Wang2025}. By quantifying both regimes via their non-linear operational bottlenecks, and building upon the exact architectural bounds established in Theorem~\ref{theor:complexity_total}, we can state the following explicit separation.

First, we establish the classical algorithmic baseline by formalizing the multiplicative complexity required to evaluate the Diophantine system. It is important to emphasize that while specific, highly structured Diophantine systems may admit specialized classical heuristics, the general bounded problem is NP-complete. Consequently, any exact algorithm that guarantees correctness for arbitrary non-linear systems must fundamentally rely on exhaustive enumeration in the worst case.

\begin{lemma}[Worst-Case Classical Multiplicative Complexity of Diophantine Verification]\label{lemma:class_complex}
Let $\mathcal{D} \subset \mathbb{Z}^n$ be a search domain where each of the $n$ variables is restricted to an interval of length $N$. Consider a system of $m$ Diophantine polynomial equations $\{f_j(\mathbf{x}) = 0\}_{j=1}^m$, where $f_j(\mathbf{x}) = \sum_{\alpha} c_{j,\alpha} \mathbf{x}^\alpha$ with maximum degree $d \geq 2$. 

Let $q_{\mathrm{cl}}$ denote the maximum classical bit-width required to evaluate any intermediate or final polynomial value without overflow, and let $\Lambda \asymp \sum_{j,\alpha} \big(|\alpha| + w_H(|c_{j,\alpha}|)\big)$ denote the cumulative arithmetic penalty dictated by the polynomial degrees and the coefficient Hamming weights. The classical multiplicative bit-complexity (the minimal number of non-linear boolean gates, e.g., AND gates) required to verify whether a single candidate assignment $\mathbf{x} \in \mathcal{D}$ is a valid solution is strictly bounded by:
\begin{equation}
    C_{\mathrm{eval}} = \mathcal{O}(q_{\mathrm{cl}}^2 \Lambda)
\end{equation}
\end{lemma}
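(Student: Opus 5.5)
We prove the bound by exhibiting an explicit classical Boolean circuit that verifies a fixed candidate $\mathbf{x}$, and we count only its AND gates. Linear gates (XOR, NOT) are free in the multiplicative-complexity measure. The circuit is the irreversible analogue of the compute-utilize-uncompute pipeline in Theorem~\ref{theor:complexity_total}, with the uncompute phase dropped. The cost accounting can therefore follow the same term-by-term decomposition, with $q$ replaced by $q_{\mathrm{cl}}$.

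The first step is to fix the primitives. A ripple-carry adder on two $q_{\mathrm{cl}}$-bit two's complement words needs $\mathcal{O}(q_{\mathrm{cl}})$ AND gates, since each full adder uses one AND gate in its majority-based form. Negation costs the same, because bitwise complement is free and the $+1$ enters through the carry-in. Multiplication by a classical constant $c$ uses shifts, which are free rewirings, plus $w_H(|c|)$ additions. Its cost is $\mathcal{O}(q_{\mathrm{cl}}\, w_H(|c|))$. A schoolbook multiplication of two words of width at most $q_{\mathrm{cl}}$ is done by shift-and-add with gated partial products. Each partial product costs $\mathcal{O}(q_{\mathrm{cl}})$ AND gates for the masking and $\mathcal{O}(q_{\mathrm{cl}})$ for the addition, so a full multiplication costs $\mathcal{O}(q_{\mathrm{cl}}^2)$. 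Signed operands are handled by the same MSB correction used in the squaring operator, which adds one further controlled subtraction.

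The second step bounds a single monomial $c_{j,\alpha}\mathbf{x}^\alpha$. We build $\mathbf{x}^\alpha$ by $|\alpha|-1$ successive pairwise multiplications. By the definition of $q_{\mathrm{cl}}$, every intermediate value fits in $q_{\mathrm{cl}}$ bits. This stage therefore costs $\mathcal{O}(|\alpha|\,q_{\mathrm{cl}}^2)$. We then multiply by the constant, costing $\mathcal{O}(q_{\mathrm{cl}}\,w_H(|c_{j,\alpha}|))$, and accumulate into the running sum with one more $\mathcal{O}(q_{\mathrm{cl}})$ addition. The per-term cost is bounded by $\mathcal{O}\big(q_{\mathrm{cl}}^2(|\alpha| + w_H(|c_{j,\alpha}|))\big)$. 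This is a weaker bound than we actually obtain, but it matches the form of $\Lambda$.

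The third step sums this over all $(j,\alpha)$, which gives $\mathcal{O}(q_{\mathrm{cl}}^2\Lambda)$ for evaluating every $f_j(\mathbf{x})$. We then check the equalities. Testing $f_j(\mathbf{x})=0$ is an $m$-fold OR of the bits of a $q_{\mathrm{cl}}$-bit word, which costs $\mathcal{O}(q_{\mathrm{cl}})$ AND gates per equation by De Morgan. Combining the $m$ results costs $m-1$ further AND gates. The total verification overhead is $\mathcal{O}(m\,q_{\mathrm{cl}})$. This is absorbed because $\Lambda \ge m$ whenever every equation has at least one monomial.

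The main obstacle is bookkeeping rather than any deep difficulty. We must make sure the width assumption on $q_{\mathrm{cl}}$ really covers every intermediate product, including signed extensions. We must also make sure that monomials of degree $0$ or $1$, for which $|\alpha|-1\le 0$, do not break the per-term bound. The degree-$0$ and degree-$1$ cases are handled directly by the linear-cost estimate of Section~\ref{subsec:linear_arithmetic}. The lemma claims only an upper bound, so no lower-bound argument on multiplicative complexity is needed.
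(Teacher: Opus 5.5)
Your proposal is correct and follows essentially the same route as the paper: you exhibit an explicit circuit, charge $|\alpha|-1$ pairwise multiplications at $\mathcal{O}(q_{\mathrm{cl}}^2)$ AND gates plus a shift-and-add coefficient weighting per monomial, and sum over all $(j,\alpha)$. Your accounting is slightly tighter and more complete than the paper's, since you charge the coefficient step only $\mathcal{O}(q_{\mathrm{cl}}\,w_H(|c_{j,\alpha}|))$ and explicitly absorb the final zero-tests via $\Lambda\ge m$, which the paper leaves implicit; one small slip is that the zero-test for each $f_j$ is a $q_{\mathrm{cl}}$-input OR, not an $m$-fold one.
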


\begin{proof}
We first determine the maximum classical bit-width $q_{\mathrm{cl}}$ required to evaluate the system. Classically, the independent variables initially require $\lceil \log_2 N \rceil$ bits to encode their intervals. During the sequential evaluation of the polynomials, the absolute value of any evaluated monomial is bounded by $|c_{j,\alpha}| N^{|\alpha|}$. Consequently, the maximum possible numerical value attained during the evaluation of any equation $f_j$ requires at most $\lceil \log_2 ( \sum_{\alpha} |c_{j,\alpha}| N^{|\alpha|} ) \rceil$ bits. Accounting for intermediate arithmetic bounds, the classical memory requirement $q_{\mathrm{cl}}$ scales identically to the architectural quantum logical qubit bound $q$ derived in Theorem~\ref{theor:complexity_total}. Because both quantities are governed by the same monomial-magnitude upper bounds, they admit the exact same asymptotic expression $q_{\mathrm{cl}} \asymp q = \mathcal{O}((n + d^2) \log_2 N)$. Thus, any pairwise classical arithmetic operation acts on a bit-string of length at most $q_{\mathrm{cl}}$.

In boolean circuit theory, computing the carries when adding two $w$-bit integers requires $\mathcal{O}(w)$ non-linear logic gates (AND gates). Multiplying two $w$-bit integers requires $\mathcal{O}(w^2)$ non-linear gates. 

To evaluate a single monomial $c_{j,\alpha}\mathbf{x}^\alpha$, the classical algorithm must compute the variable product $\mathbf{x}^\alpha$. For a term of degree $|\alpha| \ge 2$, building the intermediate pure monomial requires $|\alpha|-1$ sequential multiplications, costing $\mathcal{O}(|\alpha| q_{\mathrm{cl}}^2)$ AND gates. 

Crucially, the final arithmetic stage multiplies this intermediate variable state by the integer coefficient $c_{j,\alpha}$. Following a resource-efficient nested shift-and-add logic equivalent to the reversible architecture, this final coefficient-weighted multiplication requires $\mathcal{O}(q_{\mathrm{cl}})$ shifted additions for each of the $w_H(|c_{j,\alpha}|)$ active bits in the coefficient's binary expansion. Because each addition operates on $q_{\mathrm{cl}}$ bits, this nested accumulation inherently costs $\mathcal{O}(q_{\mathrm{cl}}^2 \cdot w_H(|c_{j,\alpha}|))$ non-linear gates.

Summing these non-linear gate contributions over all monomials across the $m$ equations yields the exact worst-case classical evaluation cost per candidate assignment:
\begin{equation}
    C_{\mathrm{eval}} = \mathcal{O}\left( \sum_{j=1}^m \sum_{\alpha} \Big( |\alpha| q_{\mathrm{cl}}^2 + q_{\mathrm{cl}}^2 \, w_H(|c_{j,\alpha}|) \Big) \right)
\end{equation}
By factoring out $q_{\mathrm{cl}}^2$, we establish the rigorous upper bound:
\begin{equation}
    C_{\mathrm{eval}} = \mathcal{O}\left( q_{\mathrm{cl}}^2 \sum_{j,\alpha} \big(|\alpha| + w_H(|c_{j,\alpha}|)\big) \right) = \mathcal{O}(q_{\mathrm{cl}}^2 \Lambda)
\end{equation}
Because $q_{\mathrm{cl}} \asymp q$, this result explicitly demonstrates that the worst-case classical non-linear boolean complexity is structurally governed by the exact same fundamental arithmetic bounds ($q$ and $\Lambda$) as our memory-efficient reversible quantum architecture, satisfying $C_{\mathrm{eval}} = \mathcal{O}(q^2 \Lambda)$.
\end{proof}

As explicitly demonstrated by Lemma~\ref{lemma:class_complex} and Theorem~\ref{theor:complexity_total}, the fundamental arithmetic bottleneck for evaluating a candidate assignment is asymptotically identical in both regimes ($\mathcal{O}(q^2 \Lambda)$). Consequently, compiling the verification logic into a reversible quantum circuit yields no intrinsic algebraic advantage over classical boolean evaluation. The quantum computational advantage emerges entirely from the algorithmic framework in which this arithmetic is deployed---specifically, the capacity to coherently query the evaluation logic across a global superposition. With both the classical baseline and the quantum arithmetic overhead rigorously bounded and structurally matched, we can formally derive the net algorithmic separation.

\begin{theorem}[Classical--Quantum Complexity Separation]\label{theor:complexity_separation}
Consider a bounded polynomial Diophantine system defined over the search domain $\mathcal{D} \subset \mathbb{Z}^n$ containing an unknown number of $M \ge 1$ valid solutions. Let $T_{\mathrm{cl}}$ denote the worst-case classical boolean complexity to exhaustively search $\mathcal{D}$ and guarantee the enumeration of all solutions. Let $T_{\mathrm{q}}$ denote the non-Clifford Toffoli complexity of the explicit quantum framework to dynamically isolate the complete set of $M$ assignments. The asymptotic complexity separation ratio between the classical and quantum operational costs is strictly governed by the algorithmic speedup:
\begin{equation}\label{eq_comparison_cl_q}
    \frac{T_{\mathrm{cl}}}{T_{\mathrm{q}}} = \Omega\left(\sqrt{\frac{N^n}{M}}\right) 
\end{equation}
\end{theorem}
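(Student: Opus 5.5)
The plan is to bound the two costs separately and then divide. Both will be expressed in the common unit of non-linear gates (AND gates classically, Toffoli gates quantumly), with the per-candidate arithmetic factor $q^2\Lambda$ taken from Lemma~\ref{lemma:class_complex} and Theorem~\ref{theor:complexity_total}.

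\textbf{Classical lower bound.} I would argue that a deterministic classical procedure that must guarantee enumeration of all solutions of an arbitrary (unstructured) bounded system has to evaluate every candidate of $\mathcal{D}$. The reason is that an unevaluated point could be a solution, so the worst case forces $|\mathcal{D}| = N^n$ evaluations. Each evaluation costs $C_{\mathrm{eval}}$, which by Lemma~\ref{lemma:class_complex} is governed by $q_{\mathrm{cl}}^2\Lambda$ with $q_{\mathrm{cl}}\asymp q$. Hence I would take $T_{\mathrm{cl}} = \Theta(N^n\, q^2\Lambda)$.

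\textbf{Quantum upper bound.} I would use the enumeration strategy of Section~\ref{sec:quantum_counting}: successive BBHT searches with $M, M-1,\dots,1$ marked states. Each search is followed by appending an MCX exclusion gate to the counter register. This MCX costs $\mathcal{O}(n\log_2 N)=\mathcal{O}(q)$ Toffolis per query, which the $q^2\Lambda$ oracle cost absorbs. By Eq.~\eqref{eq:iterative_sum} the total expected number of Grover iterations is $\mathcal{O}(\sqrt{N^n M})$. Multiplying by the per-iteration cost $C_{\mathrm{Toff}}^{\mathrm{iter}}=\mathcal{O}(q^2\Lambda)$ from Theorem~\ref{theor:complexity_total} gives $T_{\mathrm{q}}=\mathcal{O}(\sqrt{N^nM}\,q^2\Lambda)$. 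The classical verification of each measured candidate costs $\mathcal{O}(q^2\Lambda)$ per trial. The number of trials is at most the number of BBHT rounds, which is itself bounded by the iteration count, so verification does not change the order.

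\textbf{Ratio.} Dividing, the factor $q^2\Lambda$ cancels:
\begin{equation*}
\frac{T_{\mathrm{cl}}}{T_{\mathrm{q}}} = \Omega\!\left(\frac{N^n q^2\Lambda}{\sqrt{N^n M}\,q^2\Lambda}\right) = \Omega\!\left(\sqrt{\frac{N^n}{M}}\right).
\end{equation*}

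\textbf{Main obstacle.} The delicate step is the classical side. Lemma~\ref{lemma:class_complex} is only an upper bound $\mathcal{O}(q^2\Lambda)$, whereas the ratio needs $T_{\mathrm{cl}}$ bounded below. I would handle this by stating that the comparison is made against the same shift-and-add evaluation model: the classical baseline uses the identical arithmetic pipeline, so $C_{\mathrm{eval}}=\Theta(q^2\Lambda)$ within that model. The lower bound on the number of candidates comes from the unstructured worst case, i.e.\ NP-hardness with no exploitable structure, rather than from an unconditional circuit lower bound.

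A secondary point is that BBHT gives an expected rather than worst-case count. I would either state $T_{\mathrm{q}}$ in expectation, or convert it to a high-probability bound by Markov's inequality with constant-factor repetition, which leaves the asymptotic ratio unchanged.
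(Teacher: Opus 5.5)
\textbf{Gap 1: the lower bound on the classical evaluation cost.} Your route is the paper's: $T_{\mathrm{cl}}=\Theta(N^nC_{\mathrm{eval}})$, then BBHT enumeration with $\mathcal{O}(\sqrt{N^nM})$ iterations at $\mathcal{O}(q^2\Lambda)$ each, then divide. The obstacle you flag is exactly the step the paper's proof settles by asserting that $C_{\mathrm{eval}}/(q^2\Lambda)$ is $\Omega(1)$. Your way of closing it fails, because within the pipeline the circuit actually uses, $C_{\mathrm{eval}}$ is \emph{not} $\Theta(q^2\Lambda)$; $q^2\Lambda$ is only a worst-case envelope. The reasons are:
\begin{itemize}
\item every addition acts on $F$ or some $D_\ell$, whose widths are $\mathcal{O}(d\log_2N+\log_2\max|c_{j,\alpha}|)$ and do not grow with $n$;
\item each multiplication uses only $\lceil\log_2N\rceil$ controlled additions;
\item a linear term is charged $q^2(1+w_H)$ inside $q^2\Lambda$ but costs at most $\mathcal{O}(q\,w_H)$.
\end{itemize}
For the single equation $x_1^2+x_2+\cdots+x_n=c$, the pipeline costs $\mathcal{O}((n+\log_2N)\log_2(nN))$ AND gates, while $q^2\Lambda=\Theta(n^3\log_2^2N)$. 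Dividing $\Theta(N^nC_{\mathrm{eval}})$ by $\mathcal{O}(\sqrt{N^nM}\,q^2\Lambda)$ therefore only gives $\Omega(\sqrt{N^n/M}\cdot C_{\mathrm{eval}}/(q^2\Lambda))$, which is polynomially weaker than claimed. Padding the classical baseline to width $q$ would restore $\Theta(q^2\Lambda)$ only by inflating that baseline artificially.

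The repair is to keep your same-pipeline convention but never pass through $q^2\Lambda$. The oracle is a constant number of forward and inverse passes of that very pipeline: four evaluations per $F_j$, each with compute and uncompute, and then the whole counting sequence is reversed. A controlled CDKM adder on $w$ bits uses $\Theta(w)$ Toffolis, matching the $\Theta(w)$ AND gates of a classical ripple-carry adder. Hence $C^{\mathrm{iter}}_{\mathrm{Toff}}=\mathcal{O}(C_{\mathrm{eval}})$ up to lower-order bookkeeping (diffusion, counter, phase flip), and the ratio follows with no tightness claim at all.

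\textbf{Gap 2: the exclusion gates accumulate.} While searching for the $r$-th solution, the oracle carries $r-1$ exclusion MCXs. That is $\mathcal{O}(r\,n\log_2N)$ extra Toffolis per query, not $\mathcal{O}(n\log_2N)$. Summed against the BBHT costs $\sqrt{N^n/(M-r+1)}$, this overhead is of order $n\log_2N\,M^{3/2}\sqrt{N^n}$. It fits inside $\mathcal{O}(\sqrt{N^nM}\,C_{\mathrm{eval}})$ only when $M\,n\log_2N=\mathcal{O}(C_{\mathrm{eval}})$. Beyond that, the ratio degrades by this factor, and for $M$ near $N^{n/2}$ the quantum cost can even exceed $T_{\mathrm{cl}}$. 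The paper's proof omits this term entirely. You must either restrict the claim to that range of $M$, or use an enumeration scheme whose oracle does not grow with the number of solutions found.

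\textbf{A smaller point: termination.} With $M$ unknown, the final BBHT run has no marked state and never halts. You need an explicit $\mathcal{O}(\sqrt{N^n})$ cutoff. This fits the budget, but it makes completeness hold only with high probability, unlike the classical guarantee.
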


\begin{proof}
To guarantee the complete enumeration of all $M$ valid assignments without prior knowledge of their total count or distribution, a worst-case classical algorithm must sequentially inspect the entire finite domain of cardinality $|\mathcal{D}| = N^n$. By Lemma~\ref{lemma:class_complex}, the non-linear operational cost per evaluation is governed by $C_{\mathrm{eval}} = \mathcal{O}(q^2 \Lambda)$. Thus, the total classical algorithmic complexity for full enumeration is:
\begin{equation}
    T_{\mathrm{cl}} = \Theta\Big(N^n \cdot C_{\mathrm{eval}}\Big) 
\end{equation}
In the quantum regime, our reversible architecture completely replaces the exhaustive spatial search with coherent amplitude amplification. As established in Section~\ref{sec:quantum_counting}, extracting the complete set of $M$ valid assignments dynamically via the iterative BBHT schedule requires an accumulated $\mathcal{O}(\sqrt{N^n \cdot M})$ queries. Crucially, as derived in Theorem~\ref{theor:complexity_total}, the maximum fault-tolerant Toffoli depth of evaluating the system introduces an arithmetic overhead of exactly $\mathcal{O}(q^2 \Lambda)$. Therefore, the total quantum algorithmic complexity for complete enumeration is:
\begin{equation}
    T_{\mathrm{q}} = \mathcal{O}\Big(\sqrt{N^n \cdot M} \cdot q^2 \Lambda\Big) 
\end{equation}
Taking the ratio of the total classical and quantum operational costs yields:
\begin{equation}
    \frac{T_{\mathrm{cl}}}{T_{\mathrm{q}}} = \frac{\Theta\big(N^n \cdot C_{\mathrm{eval}}\big)}{\mathcal{O}\big(\sqrt{N^n \cdot M} \cdot q^2 \Lambda\big)} = \Omega\left( \sqrt{\frac{N^n}{M}} \cdot \frac{C_{\mathrm{eval}}}{q^2 \Lambda} \right) 
\end{equation}
Because Lemma~\ref{lemma:class_complex} and Theorem~\ref{theor:complexity_total} establish that both architectures share the exact same asymptotic arithmetic bottleneck $\mathcal{O}(q^2 \Lambda)$---derived from the identical underlying nested shift-and-add logic---their structural quotient $\frac{C_{\mathrm{eval}}}{q^2 \Lambda}$ is bounded below by a constant $\Omega(1)$. As a result, the arithmetic algorithmic overheads effectively mirror and absorb one another. Consequently, the true lower bound of the advantage is strictly defined by the search space reduction, proving the separation.
\end{proof}

It is worth noting that this identical separation ratio holds even if the operational objective is relaxed. If the task is strictly to find a \textit{single} valid assignment, a classical algorithm randomly inspecting candidates will succeed in an expected $\mathcal{O}(N^n/M)$ evaluations. Since the corresponding BBHT quantum search requires an expected $\mathcal{O}(\sqrt{N^n/M})$ queries to isolate one solution, the asymptotic ratio between the classical and quantum costs remains precisely $\Omega(\sqrt{N^n/M})$.

The relation in Eq.~\eqref{eq_comparison_cl_q} encapsulates the core of the quantum computational advantage for these mathematical problems. Because the geometric size of the search space fundamentally dwarfs the structural parameters of the polynomial system ($N^n \gg q^2 \Lambda$), the arithmetic overhead acts merely as a subdominant prefactor. Consequently, while the underlying Diophantine problem remains asymptotically hard---requiring $\mathcal{O}(N^{n/2})$ iterations in the worst case---the architecture achieves a genuine quadratic speedup over classical exhaustive enumeration. Furthermore, this advantage is analytically robust against potential classical optimizations. Even if specialized heuristics reduce the classical evaluation cost $C_{\mathrm{eval}}$ below our worst-case bound, exact verification inherently requires at least polynomial time. Since the quantum oracle overhead is also strictly polynomial, no classical boolean optimization can asymptotically absorb the diverging search-space factor, ensuring that the foundational quantum speedup is structurally preserved.

\section{Simulation Results} \label{sec:results}

To rigorously validate the theoretical complexity bounds and evaluate the true scalability of the proposed quantum architecture, we performed comprehensive, implementation-level resource estimations and coherent state-vector simulations. The complete algorithmic framework was synthesized into explicit unitary circuits using \textit{Qiskit}~\cite{qiskit2024}. Rather than relying on heuristic asymptotic estimates or bounding solely the arithmetic oracle $O_f$, the full Grover search operator was explicitly constructed. Its logical gate composition was then exactly counted across thousands of randomly generated Diophantine instances, encompassing all subroutines—including the inequality mapping logic, the uncomputation steps, and the standard diffusion operator. This rigorous compilation yields a hardware-agnostic, deterministic measure of logical complexity that reflects the true computational overhead of the coherent search.

\subsection{Resource Metrics and Gate Cost Model}

To ensure a fair and architecture-independent comparison, the computational overhead is quantified primarily at the level of Toffoli-equivalent gates. This metric is motivated by the physical constraints of fault-tolerant quantum computing: logical Clifford operations (such as CNOT, $H$, $S$, and $X$) can typically be implemented transversally in topological color codes~\cite{CC1, CC2}, for which key fault-tolerant ingredients have now been experimentally explored across trapped-ion, superconducting, and neutral-atom platforms~\cite{ION1, GOOGLE1, QUERA1}. Conversely, non-Clifford operations strictly dominate the overall execution bottleneck due to the rigorous hardware demands of magic-state distillation~\cite{Wang2025, Ortega2025}. Therefore, expressing the total logical gate count in Toffoli-equivalents provides the most meaningful indicator of the actual scaling limitations.

\begin{figure*}[t] 
  \centering 
  \begin{subfigure}{0.49\textwidth} 
    \centering 
    \includegraphics[width=\linewidth]{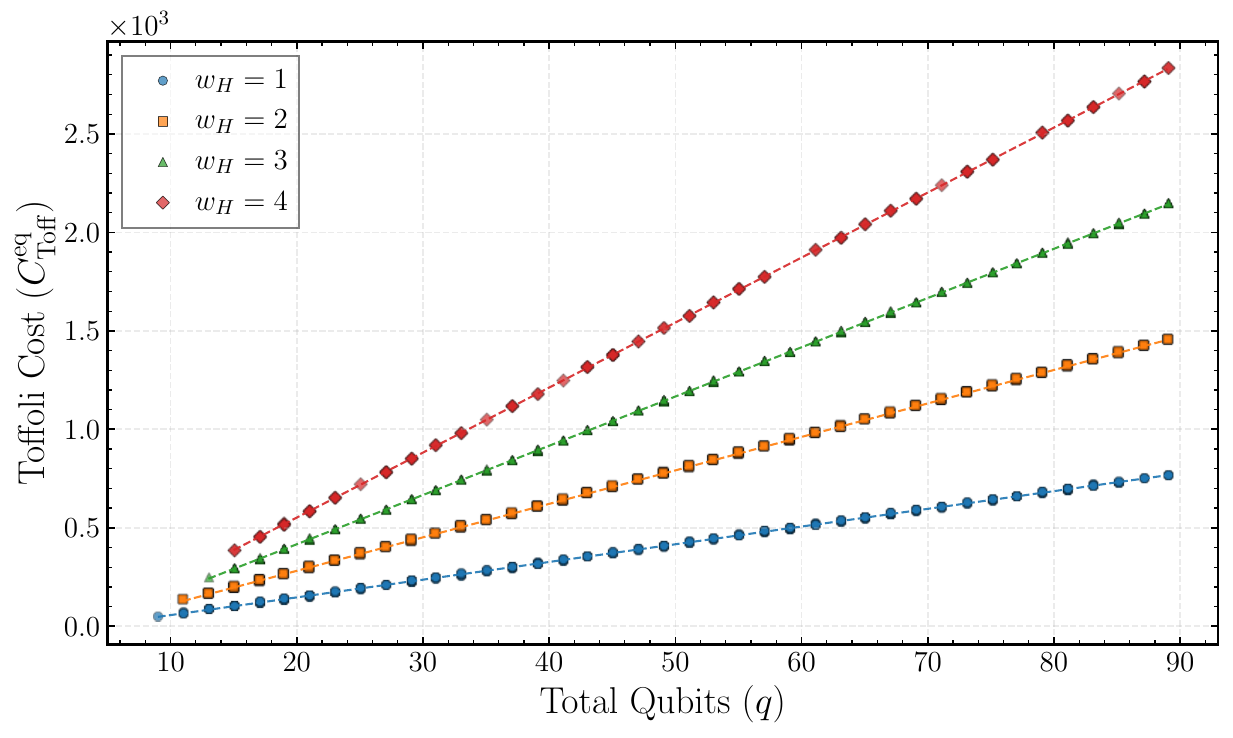} 
  \end{subfigure}\hfill 
  \begin{subfigure}{0.49\textwidth} 
    \centering 
    \includegraphics[width=\linewidth]{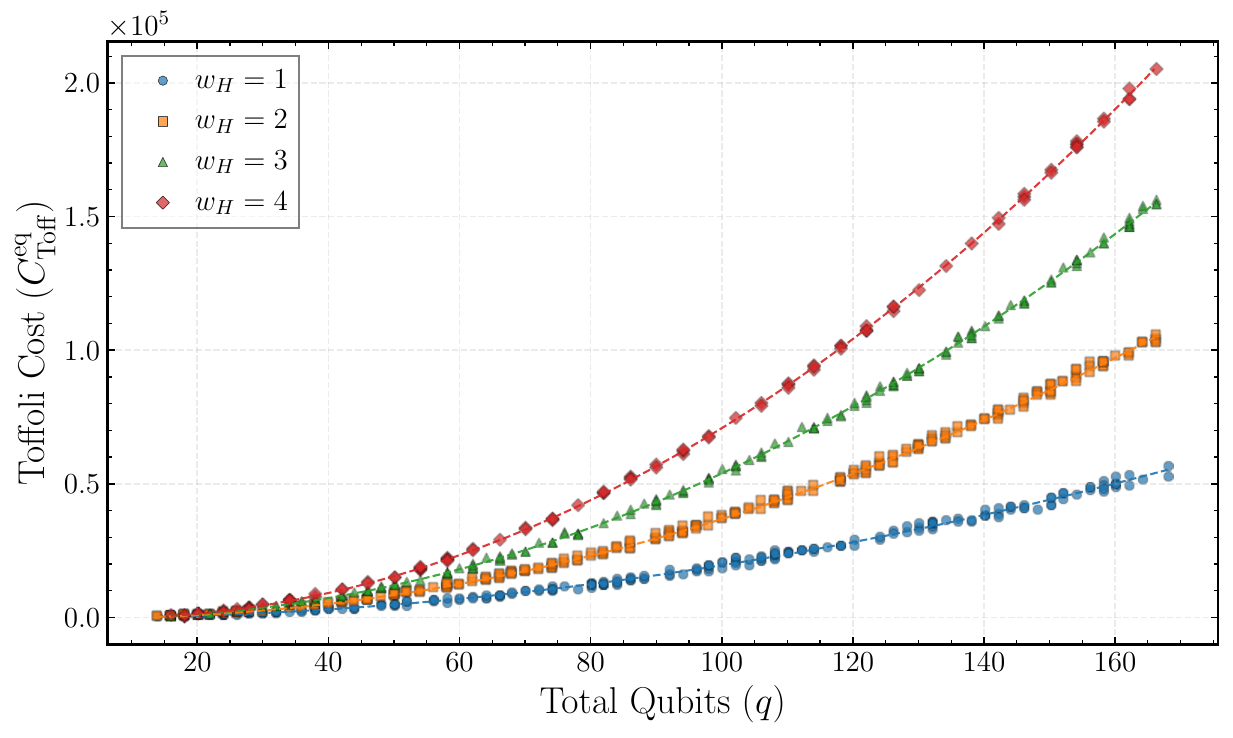} 
  \end{subfigure}
  \caption{\justifying 
\textbf{Empirical complexity scaling of the Grover operator for linear and quadratic polynomials.} 
Resource estimation analysis for Diophantine oracles with a single variable ($n=1$) across 1500 problem instances. 
\textbf{(Left)} Toffoli gate count ($C_{\mathrm{Toff}}^{\mathrm{eq}}$) versus total number of logical qubits ($q$) for linear Diophantine equations of the form $c_1 x + c_0 = 0$, exhibiting strictly linear complexity growth ($\mathcal{O}(q)$) where the slope of this linear scaling is directly proportional to the Hamming weight of the leading coefficient, $w_H(|c_1|)$. 
\textbf{(Right)} Toffoli gate count ($C_{\mathrm{Toff}}^{\mathrm{eq}}$) versus total number of logical qubits ($q$) for quadratic Diophantine equations of the form $c_2 x^2 + c_1 x + c_0 = 0$, illustrating the strict $\mathcal{O}(q^2)$ envelope associated with higher-degree nested arithmetic. In both panels, the gate complexity scales proportionally with the Hamming weight of the respective leading coefficient. Because this weight is inherently bounded by the integer bit-length, the empirical behavior formally validates an overall non-Clifford upper bound of $\mathcal{O}(q \log_2 |c_1|)$ and $\mathcal{O}(q^2 \log_2 |c_2|)$ for the linear and quadratic cases, respectively.
  }
  \label{fig:toffoli_lin_quad}
\end{figure*}

In this work, we adopt a unified Toffoli-equivalent cost model~\cite{Nielsen_Chuang_2010, Ortega2025}. Any natively synthesized standard Toffoli (CCX) gate structurally corresponds to exactly one Toffoli-equivalent. For multi-controlled $X$ gates (MCX) with $n_c$ control qubits, the cost contribution is assigned according to the well-established linear-depth decomposition:
\begin{equation}
    C_{\mathrm{MCX}}(n_c) =
\begin{cases}
0, & n_c < 2, \\
1, & n_c = 2, \\
2n_c - 3, & n_c > 2.
\end{cases}
\end{equation}

These cases correspond, respectively, to zero-cost Clifford operations (such as standard CNOTs or unconditioned $X$ gates), a standard Toffoli gate ($n_c=2$), and a linear-depth cascaded decomposition. This higher-order decomposition ($n_c > 2$) systematically utilizes intermediate ancilla qubits to break down the macroscopic multi-control condition into a sequential chain of exactly $2n_c - 3$ standard Toffoli gates~\cite{Ortega2025}.

Furthermore, the synthesis of complex arithmetic sequences occasionally leaves behind independent, single-qubit non-Clifford rotations ($T$ and $T^\dagger$ gates). To unify these isolated rotations into our primary metric, their cumulative contribution is converted into Toffoli-equivalents based on standard magic-state distillation ratios. Because a single logical Toffoli gate typically requires seven $T$-gate equivalents to be fault-tolerantly synthesized, we define the standalone rotation cost as:
\begin{equation}
C_T = \frac{N_T + N_{T^\dagger}}{7},
\end{equation}
where $N_T$ and $N_{T^\dagger}$ denote the total counts of individual $T$ and $T^\dagger$ gates in the transpiled circuit. All baseline Clifford gates are assigned a negligible cost weight of zero. 

Consequently, the total non-Clifford depth of a complete Grover iteration instance is formally evaluated by the sum:
\begin{equation}
C_{\mathrm{Toff}}^{\mathrm{eq}} = \sum_{i=1}^{N_{\mathrm{MCX}}} C_{\mathrm{MCX}}(n_{c, i}) + C_T + N_{\mathrm{CCX}},
\end{equation}
where $N_{\mathrm{MCX}}$ is the total number of strictly multi-controlled $X$ gates ($n_c > 2$), $n_{c, i}$ denotes the specific number of control qubits for the $i$-th such gate, and $N_{\mathrm{CCX}}$ is the total count of standard Toffoli gates. By applying this exact cost model within the Qiskit framework, we transparently map the raw topological circuit depth to its true fault-tolerant operational cost.

\begin{figure*}[t] 
  \centering 
  \begin{subfigure}{0.49\textwidth} 
    \centering 
    \includegraphics[width=\linewidth]{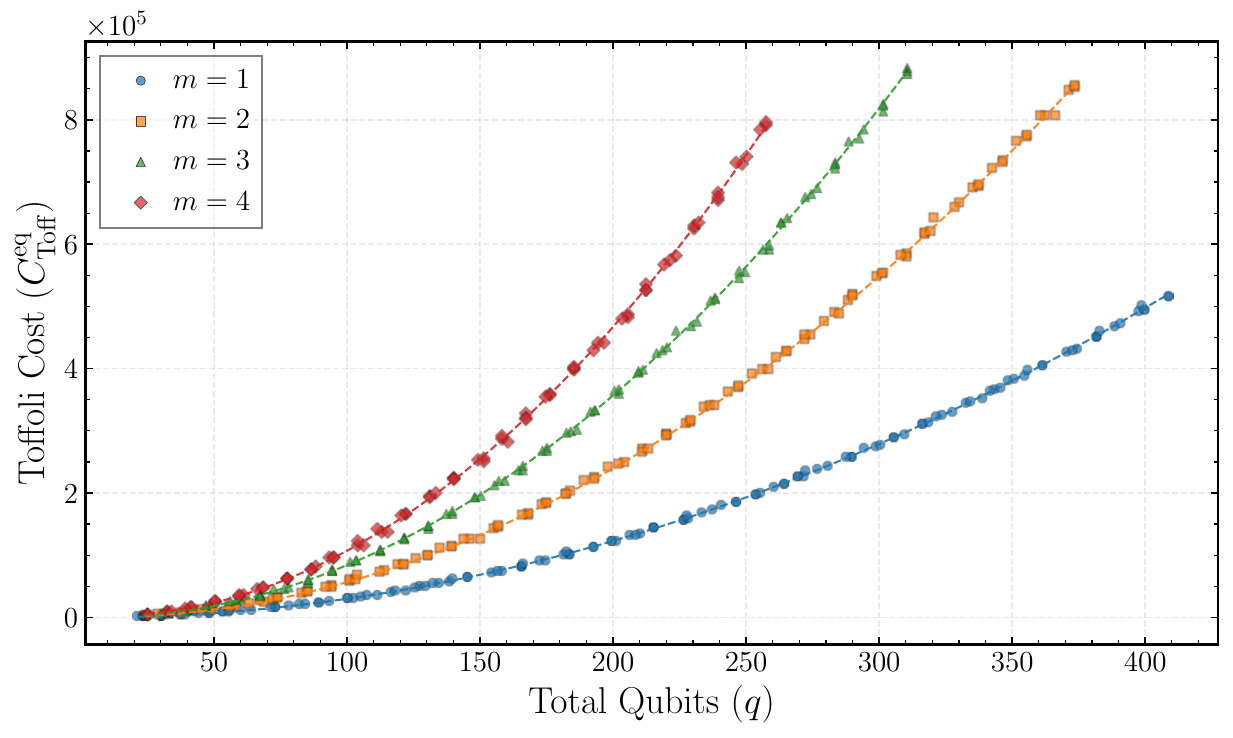} 
  \end{subfigure}\hfill 
  \begin{subfigure}{0.49\textwidth} 
    \centering 
    \includegraphics[width=\linewidth]{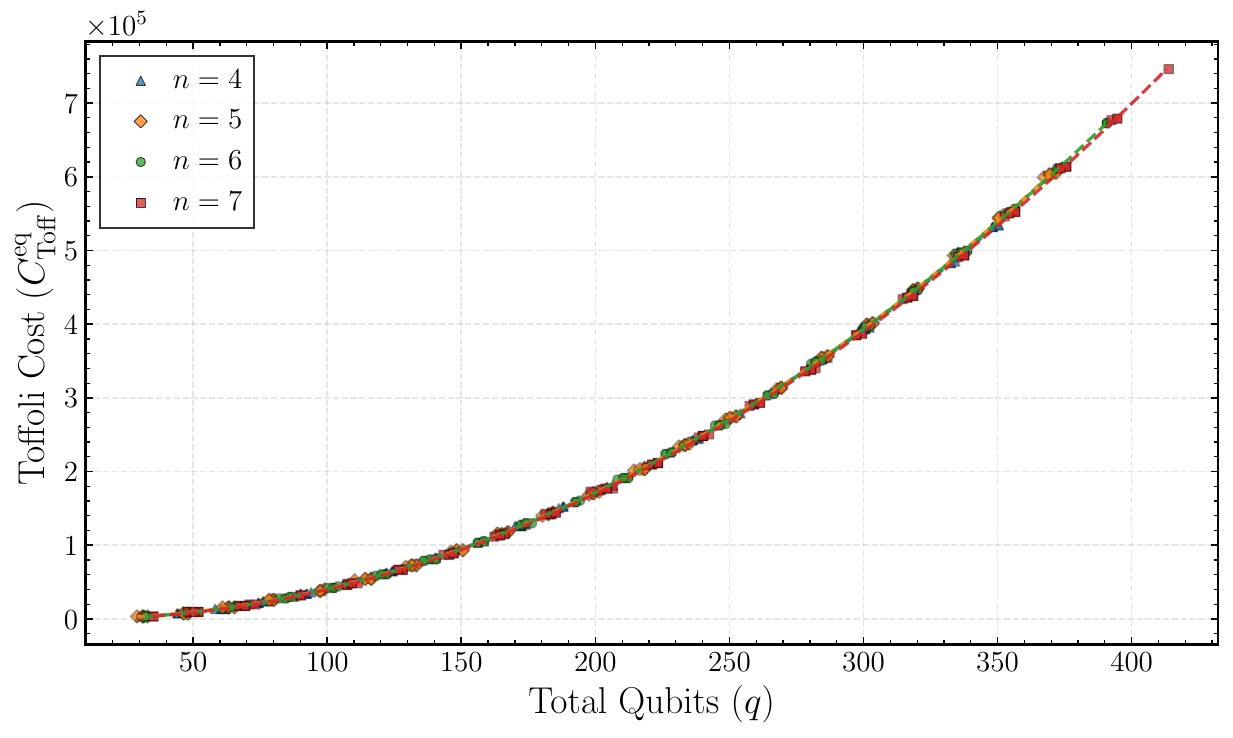} 
  \end{subfigure}

  \caption{\justifying 
    \textbf{Scaling analysis of the Diophantine oracle isolating the impact of system dimensionality.} 
    \textbf{(Left)} Toffoli gate count ($C_{\mathrm{Toff}}^{\mathrm{eq}}$) versus total number of logical qubits ($q$) for systems with varying numbers of equality constraints ($m \in [1, 4]$), evaluated at a fixed maximum polynomial degree ($d=3$) and a constant cumulative Hamming weight per equation. Adding equations acts as a linear scaling factor over the established quadratic baseline, demonstrating that the gate complexity grows strictly proportionally to $m$ without altering the fundamental scaling envelope. 
    \textbf{(Right)} Toffoli gate count ($C_{\mathrm{Toff}}^{\mathrm{eq}}$) versus total number of logical qubits ($q$) for systems with varying numbers of variables ($n \in [4, 7]$), evaluated at a fixed degree ($d=4$) and a fixed cumulative coefficient weight. Remarkably, increasing the independent variables leaves the fundamental $\mathcal{O}(q^2)$ scaling exponent completely invariant. Because the total logical qubit count $q$ inherently encompasses the state space for all $n$ variables, the computational burden of a wider variable set is natively absorbed into the horizontal $q$-axis, validating the robustness of sequentially recycling the central accumulator.
  }
  \label{fig:toffoli_eqs_vars}
\end{figure*}

\subsection{Empirical Complexity of the Grover Operator}

We first analyze the arithmetic core by isolating linear and quadratic Diophantine equations for a single variable ($n=1$). Figure~\ref{fig:toffoli_lin_quad} presents the Toffoli gate count ($C_{\mathrm{Toff}}^{\mathrm{eq}}$) against the total number of logical qubits ($q$) across 1500 randomly generated instances. The left panel aligns with the theoretical predictions of Section~\ref{subsec:linear_arithmetic}: purely linear equations ($c_1 x + c_0 = 0$) exhibit a linear gate depth ($\mathcal{O}(q)$). The slope of this linear scaling is explicitly determined by the Hamming weight of the leading coefficient, $w_H(|c_1|)$. By isolating this variable, the empirical data confirms that each active bit in the coefficient's binary expansion imposes a discrete, constant additive overhead. Because this Hamming weight is upper-bounded by the integer's bit-length, the algorithmic depth scales logarithmically with the absolute magnitude of the coefficient, validating the $\mathcal{O}(q \log_2 |c_1|)$ boundary.

The right panel of Figure~\ref{fig:toffoli_lin_quad} illustrates the transition to quadratic Diophantine equations ($c_2 x^2 + c_1 x + c_0 = 0$). Here, the necessity of nested shift-and-add multiplications shifts the computational complexity into an $\mathcal{O}(q^2)$ regime. Stratifying the simulated data by the Hamming weight of the leading non-linear coefficient, $w_H(|c_2|)$, reveals a structured resource profile: the leading scaling constant is directly proportional to the number of non-zero bits in $c_2$. This confirms that the maximum arithmetic depth is bounded by $\mathcal{O}(q^2 \log_2 |c_2|)$, as derived in Theorem~\ref{theor:complexity_total}. The observed intra-weight dispersion arises from the randomized linear terms ($c_1$). Nevertheless, this lower-order variance is asymptotically eclipsed by the predictable $\mathcal{O}(q^2)$ envelope, confirming that the high-degree in-place accumulation constitutes the true resource bottleneck.

To evaluate the robustness of the architecture against system dimensionality, Figure~\ref{fig:toffoli_eqs_vars} contrasts the impact of expanding the system of equations ($m$) versus increasing the number of variables ($n$). The left panel demonstrates that for a fixed maximum degree ($d=3$) and a constant cumulative Hamming weight per equation, increasing the number of equations ($m \in [1, 4]$) yields a purely additive vertical shift in the Toffoli depth. Because the oracle evaluates constraints sequentially using the same recycled accumulator $F$, adding equations is algorithmically indistinguishable from evaluating a single equation with a proportionally larger aggregate Hamming weight. Consequently, $m$ manifests strictly as a linear prefactor over the established $\mathcal{O}(q^2)$ baseline, rather than triggering a multidimensional complexity explosion. Conversely, the right panel isolates the effect of variable scaling ($n \in [4, 7]$) at a fixed degree ($d=4$) and a fixed cumulative coefficient weight. Remarkably, increasing the independent variables leaves the fundamental $\mathcal{O}(q^2)$ scaling exponent perfectly invariant. Because the total logical qubit count $q$ inherently encompasses the state space for all $n$ variables, the computational burden of a wider variable set is natively absorbed into the horizontal $q$-axis. By routing all arithmetic sequentially through the central framework, the architecture effectively immunizes the algorithm against variable-induced depth explosions.

The overall performance of the generalized Grover operator is synthesized in the comprehensive analysis of Figure~\ref{fig:resource_estimation_combined} (previously referenced in Section~\ref{sec:intro}), encompassing extensive systems with variables $n \in [1, 7]$ and arbitrary polynomial degrees $d \in [2, 7]$. A log-log regression of the Toffoli depth (left panel) yields an empirical scaling exponent of $1.77$. This formally confirms that, across the practical pre-asymptotic regime, the actual circuit depth exhibits a highly optimized sub-quadratic growth, operating comfortably within the worst-case $\mathcal{O}(q^2)$ theoretical boundary. This sub-quadratic efficiency empirically demonstrates the success of dynamic zero-skipping within the quantum-controlled routines. Crucially, this optimization extends seamlessly to the spatial domain. The empirical logical qubit overhead (right panel) demonstrates a strictly linear correlation with the theoretical problem width, perfectly validating the analytical space complexity bound of $q = \mathcal{O}((n + d^2)\log_2 N)$ derived in Eq.~\eqref{eq:cost_asint}. By rigorously executing the intermediate monomial uncomputation strategy, the framework definitively confines the spatial requirements to this compact formulation, proving that the theoretical memory limits translate flawlessly into scalable hardware synthesis.

\subsection{Validation of Coherent Amplitude Amplification}

Although resource scaling guarantees architectural efficiency, it does not inherently prove the correctness of the quantum phase manipulations. To validate the strict reversibility and operational accuracy of our formulation, we simulated the complete Grover search algorithm over a strongly coupled multivariate quadratic system. 

As detailed in Figure~\ref{fig:prob_vs_step}, we encoded an $n=3$ variable Diophantine system ($x,y,z$) discretized at $\lceil \log_2 N \rceil = 3$ qubits per variable (accounting for the two's complement sign bit). This defines a finite search domain of size $|\mathcal{D}| = N^n = 2^9 = 512$. To rigorously test the uncomputation routines against cross-register entanglement, the benchmark system was explicitly designed with heavily coupled non-linear terms ($xy, yz, xz$) and non-trivial bounds, possessing a unique valid integer assignment at $\ket{x=3, y=2, z=1}_S$:
\begin{equation}
    \begin{cases}
        3x^2+2y^2+5z^2 = 40, \\
        2xy-4yz+3xz = 13, \\
        -x^2+5y-7z = -6.
    \end{cases}
\end{equation}

The state-vector simulation results exhibit pristine amplitude amplification dynamics. The probability of measuring the target state grows monotonically, seamlessly navigating the complex algebraic landscape generated by the $O_f$ oracle. Crucially, the success probability peaks at $99.9\%$ exactly at iteration $t = 17$, which perfectly matches the theoretical optimum predicted by Grover's formulation for a single marked state: $t_{\mathrm{opt}} = \lfloor \frac{\pi}{4}\sqrt{|\mathcal{D}|} \rfloor = 17$. 

This deterministic convergence provides critical empirical proof for the structural integrity of the oracle. It confirms that the inequality-mapping protocol (via the MSB check) and the dynamic uncomputation routines successfully clear all temporary registers without leaving residual garbage entanglement or parasitic phase errors. Ultimately, demonstrating that such a dense, cross-coupled polynomial system can be coherently resolved and simulated within a compact logical workspace physically validates the theoretical resource efficiency of the proposed architecture.

\section{Conclusions} \label{sec:conclusions}

In this work, we have introduced a reversible quantum arithmetic framework for evaluating non-linear Diophantine equations. Moving decisively beyond abstract ``black-box'' assumptions, we provided an explicit, end-to-end gate-level construction capable of computing arbitrary integer polynomials. While evaluated here within the context of quantum search, this arithmetic engine is inherently algorithm-agnostic. It serves as a concrete computational primitive ready for integration into broader quantum protocols requiring coherent integer arithmetic, such as quantum walks or advanced optimization heuristics.

The architecture's efficiency stems from a set of targeted design choices: virtual rewiring for zero-cost scalar multiplication, a sequential compute-utilize-uncompute strategy for bounded monomial generation, and MSB-based inequality mapping for $\mathcal{O}(1)$-depth equality checks. The solidity of this framework relies on the correspondence between our analytical derivations and empirical evidence. Through implementation-level synthesis across thousands of randomized problem instances, we validated that these techniques bypass the prohibitive overheads typical of generic nested multipliers. Our Toffoli counts confirm that the non-Clifford depth per Grover iteration operates within the theoretical boundary of $\mathcal{O}(q^2 \Lambda)$, while the spatial complexity is confined to a logical workspace of $q = \mathcal{O}((n + d^2) \log_2 N)$ qubits. This alignment between theory and simulation ensures a self-consistent and robust resource estimation.

Crucially, this explicit architectural synthesis allows us to evaluate the classical-quantum complexity separation with rigorous transparency. Because the compiled arithmetic oracle requires strictly bounded polynomial overheads in space and time, it theoretically preserves the fundamental algorithmic speedup of amplitude amplification. While achieving an absolute wall-clock advantage in practice will depend heavily on overcoming the inherent clock-speed disparities between classical CPUs and fault-tolerant quantum hardware, our framework mathematically guarantees that the oracle compilation does not introduce hidden exponential bottlenecks. Consequently, whether identifying a unique solution or dynamically enumerating valid assignments via iterative geometric schedules, the architecture ensures that the theoretical $\mathcal{O}(\sqrt{|\mathcal{D}|})$ Grover scaling remains cleanly intact.

Finally, our Toffoli-equivalent resource estimation provides a deterministic blueprint for evaluating the feasibility of this architecture on future fault-tolerant hardware. Building upon this baseline, future research could adapt this framework to structured families of polynomial systems arising in cryptography or discrete optimization, exploiting specific algebraic symmetries to further compress the compilation depth. Ultimately, this work provides a scalable, self-consistent, and transparent toolkit for mapping complex non-linear integer problems directly into practically compilable quantum circuits.

\section{Acknowledgements}
G.E. and M.A.M.-D. acknowledge the support from grants MINECO/FEDER Projects, PID2021-122547NB-I00 FIS2021, MADQuantumCM project funded by Comunidad de Madrid, the Recovery, Transformation, and Resilience Plan, NextGenerationEU, funded by the European Union, and the Ministry of Economic Affairs Quantum ENIA project funded by Madrid ELLIS Unit CAM. G. E. also acknowledge the support from the CAM Program TEC-2024/COM-84 QUITEMAD-CM.
M.A.M.-D. has also been partially supported by the U.S. Army Research Office through Grant No.W911NF-14-1-0103. This work has been financially supported by the Ministry for Digital Transformation and of Civil Service of the Spanish Government through the QUANTUM ENIA project call – Quantum Spain project, and by the European Union through the Recovery, Transformation and Resilience Plan – NextGenerationEU within the framework of the Digital Spain 2026 Agenda.

\newpage

\bibliography{bibliography}

\end{document}